\newtheorem{theorem}{Theorem}
\newtheorem{remark}{Remark}
\newtheorem{lemma}{Lemma}
\newenvironment{proof}[1][Proof]{\noindent\textbf{#1.} }{\ \rule{0.5em}{0.5em}}
\newcommand{\bY}{\mathbf{Y}}
\newcommand{\bX}{\mathbf{X}}
\newcommand{\bZ}{\mathbf{Z}}
\newcommand{\RR}{\mathbb{R}}
\newcommand{\Var}{\mathrm{Var}}
\newcommand{\sign}{\mathop{\mathrm{sign}}}
\def\Expt{\mathbb{E}}
\def\b{\mathbf}
\newcommand{\m}{\mathcal}
\def\Expt{\mathbb{E}}
\def\bpi{\mathbbm{\pi}}
\newcommand{\Ber}{\mathrm{Bernoulli}}
\newcommand{\mmse}{\mathsf{MMSE}}
\newcommand{\argmin}{\operatornamewithlimits{argmin}}
\newcommand{\argmax}{\operatornamewithlimits{argmax}}
\begin{document}

\title{Minimum MS. E. Gerber's Lemma}
\author{Or~Ordentlich and
        Ofer~Shayevitz,~\IEEEmembership{Member,~IEEE}
\thanks{The work of O. Ordentlich was supported by the Admas Fellowship Program of the Israel Academy of Science and Humanities. The work of O. Shayevitz was supported by an ERC grant no. 639573, aa CIG grant no. 631983, and an ISF grant no. 1367/14.}}

\parskip 3pt

\maketitle

\begin{abstract}
Mrs. Gerber's Lemma lower bounds the entropy at the output of a binary symmetric channel in terms of the entropy of the input process. In this paper, we lower bound the output entropy via a different measure of input uncertainty, pertaining to the minimum mean squared error (MMSE) prediction cost of the input process. We show that in many cases our bound is tighter than the one obtained from Mrs. Gerber's Lemma. As an application, we evaluate the bound for binary hidden Markov processes, and obtain new estimates for the entropy rate.
\end{abstract}

\section{Introduction}
\label{sec:Intro}

Mrs. Gerber's Lemma~\cite{wz73} lower bounds the entropy of the output of a binary symmetric channel (BSC) in terms of the entropy of the input to the channel. More specifically, if $\bX\in\{0,1\}^n$ is an $n$-dimensional binary random vector with entropy $H(\bX)$, $\bZ\in\{0,1\}^n$ is an $n$-dimensional binary random vector with i.i.d. $\Ber(\alpha)$ components, statistically independent of $\bX$, and $\bY=\bX\oplus\bZ$, Mrs. Gerber's Lemma states that
\begin{align}
\frac{1}{n}H(\bY)\geq h\left(\alpha * h^{-1}\left(\frac{1}{n}H(\bX)\right) \right),\label{eq:MrsGerber}
\end{align}
where $h(p)\triangleq -p\log(p)-(1-p)\log(1-p)$ is the binary entropy function, $h^{-1}(\cdot)$ is its inverse function restricted to $[0,\tfrac{1}{2}]$ and $a* b\triangleq a(1-b)+b(1-a)$ denotes the binary convolution between two numbers $a,b\in[0,1]$. For $\bX$ i.i.d., the inequality~\eqref{eq:MrsGerber} is tight.

The inequality~\eqref{eq:MrsGerber} is in fact a simple consequence of the conditional scalar Mrs. Gerber's Lemma, which states the following: If $U$ is some random variable, $X|U=u\sim\Ber(P_u)$, and $Z\sim\Ber(\alpha)$ is statistically independent of $(X,U)$, we have that
\begin{align}
H(X\oplus Z|U)\geq h\left(\alpha*h^{-1}\left(H(X|U)\right)\right),\label{eq:ScalarMrsGerber}
\end{align}
or alternatively,
\begin{align}
\Expt h(\alpha*P_U)\geq h\left(\alpha*h^{-1}\left(\Expt h(P_U)\right)\right).\label{eq:condMrsGerber}
\end{align}

Since the publication of~\cite{wz73}, many extensions, generalizations and results of a similar flavor have been found, see e.g.,~\cite{Witsenhausen74,cs89,sz90,ja13}. In this paper, we derive a lower bound on entropy of the output $\bY$ in terms of the \emph{minimum mean squared error predictability} of the input $\bX$, as we define next. 

Let $\bpi$ be some permutation of the coordinates $\{1,2,\ldots,n\}$. We define the minimum mean squared error (MMSE) predictability of a binary vector $\bX$ w.r.t. the permutation $\bpi$ as
\begin{align}
&\mmse_{\bpi}(\bX)\nonumber\\ &\triangleq\sum_{i=1}^n \text{MMSE}\left(X_{\pi(i)}|X_{\pi(i-1)},X_{\pi(i-2)},\ldots,X_{\pi(1)}\right)\nonumber\\
&\triangleq\sum_{i=1}^n \Expt\left(\Var\left(X_{\pi(i)}|X_{\pi(i-1)},X_{\pi(i-2)},\ldots,X_{\pi(1)}\right)\right)\nonumber\\
&\triangleq\sum_{i=1}^n \Expt\left(P_i^{\b{\pi}}(1-P_i^{\b{\pi}})\right)\label{eq:defMMSEpi},
\end{align}
where the random variable $P_i^{\b{\pi}}$ is defined as
\begin{align}
P_i^{\b{\pi}}\triangleq \Pr\left(X_{\pi(i)}=1|X_{\pi(i-1)},X_{\pi(i-2)},\ldots,X_{\pi(1)}\right).\label{eq:defPi}
\end{align}
The \emph{worst-case MMSE} predictability of a binary vector $\bX$ is defined as
\begin{align}
\overline{\mmse}(\bX)\triangleq\max_{\bpi}\mmse_{\bpi}(\bX).\label{eq:defMMSE}
\end{align}

Our main result is the following.

\begin{theorem}\label{thm:MMSEGerber}
Let $\bX,\bZ$ be two statistically independent $n$-dimensional random binary vectors, where $\bX$ is arbitrary and $\bZ$ is i.i.d. $\Ber(\alpha)$. Let $\bY=\bX\oplus\bZ$. Then
\begin{align}
\frac{1}{n}H(\bY)\geq h(\alpha)+\left(1-h(\alpha)\right)4\frac{\overline{\mmse}(\bX)}{n},\label{eq:mainBound}
\end{align}
with equality if and only if $\bX$ is memoryless with $\Pr(X_i=1)\in\{0,\tfrac{1}{2},1\}$ for every $i=1,\ldots,n$.
\end{theorem}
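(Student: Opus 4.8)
The plan is to reduce the $n$-dimensional inequality to a scalar (single-letter) statement by the entropy chain rule, and then prove that scalar statement by a convexity argument.

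\emph{The scalar core.} Everything hinges on the elementary inequality: for all $\alpha,p\in[0,1]$,
\begin{align}
h(\alpha*p)\;\geq\;h(\alpha)+\bigl(1-h(\alpha)\bigr)\,4p(1-p),\label{eq:keylemma}
\end{align}
with equality, when $0<\alpha<\tfrac12$, if and only if $p\in\{0,\tfrac12,1\}$. The right-hand side is exactly the downward parabola in $p$ that meets $h(\alpha*p)$ at $p=0,\tfrac12,1$, since $h(\alpha*0)=h(\alpha*1)=h(\alpha)$ and $h(\alpha*\tfrac12)=1$. To prove \eqref{eq:keylemma}, observe that $p\mapsto h(\alpha*p)$ is symmetric about $p=\tfrac12$ and depends on $p$ only through $v:=p(1-p)\in[0,\tfrac14]$, so write $G(v):=h(\alpha*p)$ with $G(0)=h(\alpha)$ and $G(\tfrac14)=1$. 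I claim $G$ is concave on $[0,\tfrac14]$; then $G$ lies above the chord through its endpoints, which is precisely $h(\alpha)+4v(1-h(\alpha))$, giving \eqref{eq:keylemma}. For the concavity, set $\beta:=|1-2\alpha|\in[0,1]$ and $t:=\sqrt{1-4v}\in[0,1]$; reducing via $h(\cdot)=h(1-\cdot)$ to the case $\alpha\le\tfrac12$, one has $\alpha*p=\tfrac12(1-\beta t)$, and a direct differentiation shows that $dG/dv$ equals a positive constant times $\tfrac1u\ln\tfrac{1+u}{1-u}$, where $u:=\beta t\in[0,1)$. Since $\tfrac1u\ln\tfrac{1+u}{1-u}=2\sum_{k\ge0}u^{2k}/(2k+1)$ is increasing in $u$, hence in $t$, hence decreasing in $v$, the function $G$ is concave --- strictly so on $(0,\tfrac14)$ when $\beta>0$, which yields the stated equality condition. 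The cases $\alpha\in\{0,1\}$ reduce to $h(p)\ge4p(1-p)$ and are covered by the same computation; $\alpha=\tfrac12$ is a trivial equality.

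\emph{From scalar to vector.} Fix a permutation $\bpi$ and apply the chain rule along $\pi(1),\dots,\pi(n)$, writing $W_i:=(Y_{\pi(1)},\dots,Y_{\pi(i-1)})$:
\begin{align}
H(\bY)=\sum_{i=1}^n H\bigl(Y_{\pi(i)}\,\big|\,W_i\bigr).\nonumber
\end{align}
Since $Z_{\pi(i)}\sim\Ber(\alpha)$ is independent of $(X_{\pi(i)},W_i)$, we have $Y_{\pi(i)}\mid W_i=w\sim\Ber(\alpha*q_w)$ with $q_w:=\Pr(X_{\pi(i)}=1\mid W_i=w)$, so $H(Y_{\pi(i)}\mid W_i)=\Expt\,h(\alpha*q_{W_i})$. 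Applying \eqref{eq:keylemma} pointwise and taking expectations gives
\begin{align}
H(Y_{\pi(i)}\mid W_i)\;\geq\;h(\alpha)+\bigl(1-h(\alpha)\bigr)\,4\,\mmse(X_{\pi(i)}\mid W_i).\nonumber
\end{align}
The one nonroutine point is that this conditions on past \emph{outputs}, while $\mmse_{\bpi}(\bX)$ conditions on past \emph{inputs}. But $W_i$ is a function of $(X_{\pi(1)},\dots,X_{\pi(i-1)})$ and the noises $Z_{\pi(1)},\dots,Z_{\pi(i-1)}$, and the latter are independent of $X_{\pi(i)}$ as well, so $X_{\pi(i)}-(X_{\pi(1)},\dots,X_{\pi(i-1)})-W_i$ is a Markov chain; by the law of total variance applied to the conditional expectation, coarsening the conditioning cannot decrease the MMSE, so $\mmse(X_{\pi(i)}\mid W_i)\ge\Expt[P_i^{\b{\pi}}(1-P_i^{\b{\pi}})]$. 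Summing over $i$, dividing by $n$, and maximizing over $\bpi$ yields \eqref{eq:mainBound}.

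\emph{Equality.} If $\bX$ is memoryless with each $\Pr(X_i=1)\in\{0,\tfrac12,1\}$, then $\bY$ is memoryless, $\mmse_{\bpi}(\bX)$ does not depend on $\bpi$, and both sides of \eqref{eq:mainBound} equal $h(\alpha)+(1-h(\alpha))$ times the fraction of coordinates with marginal $\tfrac12$. Conversely, for $0<\alpha<\tfrac12$, equality forces (for a maximizing permutation) equality in \eqref{eq:keylemma} at every step, so $\Pr(X_{\pi(i)}=1\mid W_i)\in\{0,\tfrac12,1\}$ a.s., and equality in the total-variance step, so $P_i^{\b{\pi}}$ is a.s.\ measurable with respect to $W_i$; since $\Ber(\alpha)$ noise has full support, a function of the past inputs that is measurable w.r.t.\ the noise-corrupted past outputs must be a.s.\ constant, whence each $P_i^{\b{\pi}}$ is an a.s.\ constant in $\{0,\tfrac12,1\}$ --- i.e.\ $\bX$ is memoryless with the stated marginals. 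I expect the concavity step --- recognizing $h(\alpha*p)$ as concave in $p(1-p)$ and verifying it cleanly --- to be the real obstacle; the chain-rule and degradation arguments are then routine.
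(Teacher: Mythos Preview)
Your proof is correct and follows the same overall architecture as the paper's: reduce to a scalar inequality via the chain rule, then pass from conditioning on past outputs to conditioning on past inputs using the data-processing/total-variance monotonicity of MMSE, and finally maximize over permutations. The only genuine difference is how you establish the scalar core $h(\alpha*p)\ge h(\alpha)+(1-h(\alpha))\,4p(1-p)$. The paper writes the Taylor expansion $h(\tfrac12+\tfrac{r}{2})=1-\sum_{k\ge1}\tfrac{\log e}{2k(2k-1)}r^{2k}$, substitutes $r=(1-2\alpha)(2p-1)$, and uses $|2p-1|^{2k}\le|2p-1|^{2}$ termwise; you instead reparametrize by $v=p(1-p)$, show $G(v)=h(\alpha*p)$ is concave on $[0,\tfrac14]$ by computing $G'(v)\propto\tfrac{1}{u}\ln\tfrac{1+u}{1-u}$ and observing this is monotone, and then compare $G$ to its chord. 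Both arguments are short and yield the same equality condition $p\in\{0,\tfrac12,1\}$; the paper's is slightly slicker (no change of variables, one inequality), while yours has the merit of exhibiting the geometric reason the parabola is the right comparison curve. Your equality analysis is in fact a touch more detailed than the paper's: you spell out why, for $0<\alpha<\tfrac12$, the full-support noise forces the a.s.\ constancy of each $P_i^{\bpi}$, which the paper asserts more briefly.
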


In Section~\ref{sec:proof} we prove an MMSE version of the conditional scalar Mrs. Gerber Lemma~\eqref{eq:ScalarMrsGerber}, which implies Theorem~\ref{thm:MMSEGerber} as a simple corollary. In Section~\ref{sec:extensions} we derive several MMSE-based extensions of Theorem~\ref{thm:MMSEGerber}, including a lower bound on $H(\bY)$ for the setting where $\bZ$ is not i.i.d. as well as an upper bound on $H(\bY)$. Section~\ref{sec:comparison} compares our new bound to Mrs. Gerber's Lemma. As an application of Theorem~\ref{thm:MMSEGerber}, in Section~\ref{sec:HMM} we develop a lower bound on the entropy rate of a binary hidden Markov process, which is shown to be considerably stronger than Mrs. Gerber's Lemma in certain scenarios. Furthermore, our MMSE-based scalar lower bound derived is combined with a bounding technique developed in~\cite{ow11} to obtain new estimates on the entropy rate of binary hidden Markov processes.

\section{Proofs}
\label{sec:proof}

Mrs. Gerber's Lemma is proved by first deriving the conditional scalar inequality~\eqref{eq:ScalarMrsGerber} and then invoking the chain rule for entropy and convexity of the function $g(u)=h(\alpha*h^{-1}(u))$ to arrive at~\eqref{eq:MrsGerber}, see~\cite{wz73,elgamalkim}. Similarly, we begin by proving an MMSE version of~\eqref{eq:ScalarMrsGerber} below, from which Theorem~\ref{thm:MMSEGerber} will follow as a simple corollary.

\begin{lemma}\label{lem:condMMSEGerber}
Let $U$ be a random variable and let $X|U=u\sim\Ber(P_u)$. Denote the MMSE in estimating $X$ from $U$ by
\begin{align}
\mathrm{MMSE}(X|U)\triangleq \Expt\left(\Var(X|U)\right)=\Expt\left(P_U(1-P_U)\right).\label{eq:MMSExu}
\end{align}
Let $Z\sim\Ber(\alpha)$ be statistically independent of $(X,U)$.
Then
\begin{align}
H(X\oplus Z|U)\geq h(\alpha)+\left(1-h(\alpha)\right)4\mathrm{MMSE}(X|U),\nonumber
\end{align}
with equality if and only if $P_u\in\{0,\tfrac{1}{2},1\}$ for any value of $u$.
\end{lemma}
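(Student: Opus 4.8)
The plan is to reduce the conditional statement to a one-dimensional inequality and then prove that inequality directly. First I would write, using the fact that entropy is additive over the conditioning variable, $H(X\oplus Z\mid U)=\Expt_U\, h(\alpha * P_U)$, while $\mathrm{MMSE}(X\mid U)=\Expt_U\, P_U(1-P_U)$. Thus it suffices to show that for every $p\in[0,1]$,
\begin{align}
h(\alpha * p)\;\geq\; h(\alpha)+\bigl(1-h(\alpha)\bigr)\cdot 4p(1-p),\label{eq:pointwiseGerber}
\end{align}
since taking expectations over $P_U$ then yields the claim. Both sides of \eqref{eq:pointwiseGerber} are symmetric about $p=\tfrac12$ and agree at $p=0,1$ (value $h(\alpha)$) and at $p=\tfrac12$ (value $1$), so the content is that the left side dominates the parabola $q(p)\triangleq h(\alpha)+(1-h(\alpha))4p(1-p)$ interpolating these three points. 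The equality condition in the lemma is exactly the statement that \eqref{eq:pointwiseGerber} is an equality only at $p\in\{0,\tfrac12,1\}$, which will drop out of a strict-convexity/concavity argument.

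To prove \eqref{eq:pointwiseGerber}, I would substitute $p=\tfrac12(1-t)$ with $t\in[-1,1]$, so that $\alpha*p=\tfrac12\bigl(1-(1-2\alpha)t\bigr)$ and $4p(1-p)=1-t^2$. Defining $\phi(t)\triangleq h\!\left(\tfrac12(1-t)\right)$, the inequality becomes
\begin{align}
\phi\bigl((1-2\alpha)t\bigr)\;\geq\; h(\alpha)+\bigl(1-h(\alpha)\bigr)(1-t^2),\nonumber
\end{align}
i.e. $1-\phi\bigl((1-2\alpha)t\bigr)\le (1-h(\alpha))\,t^2 = \bigl(1-\phi(1-2\alpha)\bigr)\,t^2$. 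Writing $\beta\triangleq 1-2\alpha$ and $\psi(t)\triangleq 1-\phi(t)$ (so $\psi$ is even, $\psi(0)=0$, $\psi(1)=1$, $\psi\ge 0$), the claim is $\psi(\beta t)\le \psi(\beta)\,t^2$ for all $t\in[-1,1]$ and all $\beta\in[0,1]$. By evenness it suffices to treat $t\in[0,1]$. Fixing $t$ and regarding this as a statement about $\beta$, or fixing $\beta$ and differentiating in $t$, the key analytic fact I expect to need is that $\psi(s)/s^2$ is nondecreasing in $s$ on $(0,1]$ — equivalently that $s\mapsto \psi(\sqrt s)$ is convex, or that $\log\psi$ grows at least quadratically near $0$ — from which $\psi(\beta t)/(\beta t)^2\le \psi(\beta)/\beta^2$ gives exactly $\psi(\beta t)\le \psi(\beta)t^2$. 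This monotonicity of $\psi(s)/s^2$ is where the real work lies: one clean route is to compute the Taylor expansion $\psi(s)=1-h(\tfrac12(1-s))=\tfrac{1}{2\ln 2}\sum_{k\ge 1}\frac{s^{2k}}{k(2k-1)}$, which makes $\psi(s)/s^2=\tfrac{1}{2\ln 2}\sum_{k\ge 1}\frac{s^{2k-2}}{k(2k-1)}$ manifestly a power series in $s^2$ with nonnegative coefficients, hence nondecreasing on $[0,1]$, and strictly increasing unless $s=0$; this simultaneously delivers the equality characterization.

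The main obstacle is establishing that power-series representation of $1-h\!\left(\tfrac12(1-s)\right)$ (or, alternatively, a direct derivative computation showing $\tfrac{d}{ds}\bigl(\psi(s)/s^2\bigr)\ge 0$), and making sure the endpoint behavior at $s=1$ (i.e. $\alpha\in\{0,1\}$) and the degenerate case $\alpha=\tfrac12$ are handled — but both are easy limiting cases where \eqref{eq:pointwiseGerber} holds with equality for all $p$. Once \eqref{eq:pointwiseGerber} is in hand, the lemma follows by taking $\Expt_{P_U}$ of both sides, and the equality condition follows because the pointwise inequality is strict whenever $p\notin\{0,\tfrac12,1\}$, so equality in the averaged inequality forces $P_u\in\{0,\tfrac12,1\}$ almost surely.
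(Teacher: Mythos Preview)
Your proposal is correct and follows essentially the same route as the paper: both reduce to the pointwise inequality $h(\alpha*p)\ge h(\alpha)+(1-h(\alpha))4p(1-p)$, recenter at $p=\tfrac12$ (your $t$, the paper's $V_U=P_U-\tfrac12$), and then use the Taylor expansion $1-h(\tfrac12+\tfrac{s}{2})=\sum_{k\ge1}\tfrac{\log e}{2k(2k-1)}s^{2k}$ together with $|s|^{2k}\le s^2$ for $|s|\le1$ to get the bound, with equality exactly at $s\in\{-1,0,1\}$. Your phrasing via monotonicity of $\psi(s)/s^2$ is just a repackaging of the paper's termwise bound $(2V_U)^{2k}(1-2\alpha)^{2k}\le (2V_U)^2(1-2\alpha)^{2k}$; the underlying computation is identical. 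One small slip: the case $\alpha\in\{0,1\}$ (i.e.\ $\beta=1$) is \emph{not} a case of equality for all $p$ --- there the inequality reads $h(p)\ge 4p(1-p)$, which is strict off $\{0,\tfrac12,1\}$ --- but your power-series argument already covers it without any separate treatment.
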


\begin{proof}
Since $Z$ is statistically independent of $(X,U)$ we have
\begin{align}
H(X\oplus Z|U)=\Expt h(P_U*\alpha).\label{eq:condEnt}
\end{align}
Let $V_U\triangleq P_U-\tfrac{1}{2}$ and note that
\begin{align}
P_U*\alpha&=\left(\frac{1}{2}+V_U\right)(1-\alpha)+\alpha\left(\frac{1}{2}-V_U\right)\nonumber\\
&=\frac{1}{2}+(1-2\alpha)V_U.\label{eq:conviden}
\end{align}
Recall that the Taylor series expansion of the binary entropy function around $\tfrac{1}{2}$ is
\begin{align}
h\left(\frac{1}{2}+\frac{p}{2}\right)=1-\sum_{k=1}^{\infty}\frac{\log(e)}{2k(2k-1)}p^{2k},\label{eq:entTaylor}
\end{align}
and therefore, by~\eqref{eq:conviden} we have
\begin{align}
h(P_U&*\alpha)=1-\sum_{k=1}^{\infty}\frac{\log(e)}{2k(2k-1)}(1-2\alpha)^{2k}(2V_U)^{2k}\nonumber\\
&\geq 1-4 V_U^2\sum_{k=1}^{\infty}\frac{\log(e)}{2k(2k-1)}(1-2\alpha)^{2k}\label{eq:VuIneq}\\
&=1-4 V_U^2+4 V_U^2\left(1-\sum_{k=1}^{\infty}\frac{\log(e)}{2k(2k-1)}(1-2\alpha)^{2k}\right)\nonumber\\
&=1-4 V_U^2+4 V_U^2\cdot h\left(\frac{1}{2}+\frac{1-2\alpha}{2} \right)\nonumber\\
&=1-4 V_U^2\left(1-h(\alpha)\right)\label{eq:convent},
\end{align}
where~\eqref{eq:VuIneq} follows from the fact that $|2V_U|\leq 1$, and is satisfied with equality if and only if $V_U\in\{-\tfrac{1}{2},0,\tfrac{1}{2}\}$, which implies that $P_U\in\{0,\tfrac{1}{2},1\}$. Substituting~\eqref{eq:convent} into~\eqref{eq:condEnt} gives
\begin{align}
H(X\oplus Z|U)&\geq 1-\left(1-h(\alpha)\right) \ 4\Expt(V^2_U)\nonumber\\
&=1-\left(1-h(\alpha)\right) \ 4\Expt\left(\frac{1}{2}-P_U\right)^2\nonumber\\
&=h(\alpha)+\left(1-h(\alpha)\right) \ 4\Expt\left(P_U(1-P_U)\right),\nonumber
\end{align}
as desired.
\end{proof}

\begin{remark}
Note that the only property of the binary entropy function used in the proof above is that all coefficients of (nonzero) even order in its Taylor expansion around $\tfrac{1}{2}$ are negative, whereas all odd coefficients are zero. It follows that for any function $g:[0,1]\mapsto\RR$ whose Taylor expansion around $\tfrac{1}{2}$ is of the form
\begin{align}
g\left(\frac{1}{2}+\frac{p}{2}\right)=c_0-\sum_{k=1}^{\infty}c_k (p)^{2k},\nonumber
\end{align}
where $c_k\geq 0$ for all positive $k$ we have
\begin{align}
\Expt g\left(\alpha*P_U\right)\geq g(\alpha)+\left(c_0-g(\alpha)\right)4\mathrm{MMSE}(X|U).\nonumber
\end{align}
\end{remark}

Theorem~\ref{thm:MMSEGerber} now follows as a straightforward corollary of Lemma~\ref{lem:condMMSEGerber}.

\begin{proof}[Proof of Theorem~\ref{thm:MMSEGerber}]
By the chain rule for entropy, for any permutation $\bpi$ we have
\begin{align}
&H(\bY)=\sum_{i=1}^n H\left(Y_{\pi(i)}|Y_{\pi(i-1)},\ldots,Y_{\pi(1)}\right)\nonumber\\
&=\sum_{i=1}^n H\left(X_{\pi(i)}\oplus Z_{\pi(i)}|Y_{\pi(i-1)},\ldots,Y_{\pi(1)}\right)\label{eq:chain}\\
&\geq \sum_{i=1}^n h(\alpha)+\left(1-h(\alpha)\right) 4\mathrm{MMSE}\left(X_{\pi(i)}|Y_{\pi(i-1)},\ldots,Y_{\pi(1)}\right)\label{eq:Ymmse}.
\end{align}
Clearly
\begin{align}
&\mathrm{MMSE}\left(X_{\pi(i)}|Y_{\pi(i-1)},\ldots,Y_{\pi(1)}\right)\nonumber\\
&\geq \mathrm{MMSE}\left(X_{\pi(i)}|Y_{\pi(i-1)},\ldots,Y_{\pi(1)},Z_{\pi(i-1)},\ldots,Z_{\pi(1)}\right)\nonumber\\
&=\mathrm{MMSE}\left(X_{\pi(i)}|X_{\pi(i-1)},\ldots,X_{\pi(1)},Z_{\pi(i-1)},\ldots,Z_{\pi(1)}\right)\nonumber\\
&=\mathrm{MMSE}\left(X_{\pi(i)}|X_{\pi(i-1)},\ldots,X_{\pi(1)}\right),\label{eq:MMSEineq}
\end{align}
where the last equality follows since the random variables $\{Z_i\}_{i=1}^n$ are statistically independent of $\{X_i\}_{i=1}^n$. Thus, for any permutation $\bpi$ we have
\begin{align}
H(\bY)&\geq n h(\alpha)\nonumber\\
&+\left(1-h(\alpha)\right) \ 4 \sum_{i=1}^n \mathrm{MMSE}\left(X_{\pi(i)}|X_{\pi(i-1)},\ldots,X_{\pi(1)}\right),\label{eq:MMSEnonOpt}
\end{align}
and~\eqref{eq:mainBound} follows by maximizing~\eqref{eq:MMSEnonOpt} w.r.t. $\bpi$. By Lemma~\ref{lem:condMMSEGerber}, the inequality~\eqref{eq:Ymmse} is tight if and only if $\Pr\left(X_{\pi(i)}=1|Y_{\pi(i-1)},\ldots,Y_{\pi(1)}\right)\in\{0,\tfrac{1}{2},1\}$ for every $i$ and every realization of the vector $\left(Y_{\pi(i-1)},\ldots,Y_{\pi(1)}\right)$, whereas for $0<\alpha<1$ the inequality~\eqref{eq:MMSEineq} is tight if and only if $\bX$ is memoryless. Thus,~\eqref{eq:mainBound} holds with equality if and only if $\bX$ is memoryless with $\Pr(X_i=1)\in\{0,\tfrac{1}{2},1\}$ for every $i=1,\ldots,n$.
\end{proof}

\section{Extensions}
\label{sec:extensions}

In this section we derive several simple extensions of our main results. Since the proofs are quite similar to those of Lemma~\ref{lem:condMMSEGerber} and Theorem~\ref{thm:MMSEGerber}, we omit the full details and only sketch the differences instead.

We begin with a straightforward extension of Theorem~\ref{thm:MMSEGerber} to the conditional entropy $H(\bY|W)$ where $\bX$ may depend on $W$, while $\bZ$ and $W$ are statistically independent.

\begin{theorem}\label{thm:ConditionalEntropy}
Let $W$ be some random variable, and let $\bX,\bZ$ be two $n$-dimensional random binary vectors, where $\bX$ is arbitrary and $\bZ$ is i.i.d. $\Ber(\alpha)$. Assume that $(\bX,W)$ is mutually independent of $\bZ$, and let $\bY=\bX\oplus\bZ$. Then
\begin{align}
\frac{1}{n}H(\bY|W)\geq h(\alpha)+\left(1-h(\alpha)\right)4\frac{\overline{\mmse}(\bX|W)}{n},\nonumber
\end{align}
with equality if and only if $\bX|W=w$ is memoryless with $\Pr(X_i=1|W=w)\in\{0,\tfrac{1}{2},1\}$ for every $i=1,\ldots,n$ and every $w$.
\end{theorem}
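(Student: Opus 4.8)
The plan is to mimic the proof of Theorem~\ref{thm:MMSEGerber} essentially verbatim, carrying the extra conditioning variable $W$ along for the ride. First I would invoke the chain rule for entropy conditioned on $W$: for any permutation $\bpi$,
\begin{align}
H(\bY|W)=\sum_{i=1}^n H\left(X_{\pi(i)}\oplus Z_{\pi(i)}\,\big|\,Y_{\pi(i-1)},\ldots,Y_{\pi(1)},W\right).\nonumber
\end{align}
Each term is then lower bounded by applying Lemma~\ref{lem:condMMSEGerber} with the ``$U$'' of the lemma taken to be the pair $\left(Y_{\pi(i-1)},\ldots,Y_{\pi(1)},W\right)$; this is legitimate because $Z_{\pi(i)}\sim\Ber(\alpha)$ remains independent of $\left(X_{\pi(i)},Y_{\pi(i-1)},\ldots,Y_{\pi(1)},W\right)$, which follows from the hypothesis that $\bZ$ is independent of $(\bX,W)$. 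This yields a sum of terms $h(\alpha)+(1-h(\alpha))\,4\,\mathrm{MMSE}\!\left(X_{\pi(i)}\,|\,Y_{\pi(i-1)},\ldots,Y_{\pi(1)},W\right)$.

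Next I would lower bound each conditional MMSE by adding the past noise variables $Z_{\pi(i-1)},\ldots,Z_{\pi(1)}$ to the conditioning (conditioning cannot increase MMSE), then observe that given these noise variables the past outputs $Y_{\pi(j)}$ are in bijection with the past inputs $X_{\pi(j)}$, and finally drop the noise variables again since $\bZ$ is independent of $(\bX,W)$. This is exactly the chain of equalities/inequalities in~\eqref{eq:MMSEineq}, now with the harmless extra conditioning on $W$ appended throughout, so we arrive at
\begin{align}
H(\bY|W)\geq n\,h(\alpha)+\left(1-h(\alpha)\right)4\sum_{i=1}^n \mathrm{MMSE}\!\left(X_{\pi(i)}\,|\,X_{\pi(i-1)},\ldots,X_{\pi(1)},W\right).\nonumber
\end{align}
Recognizing the sum as $\mmse_{\bpi}(\bX|W)$ and maximizing over $\bpi$ gives the claimed bound.

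For the equality condition I would trace back when each inequality is tight, just as in the proof of Theorem~\ref{thm:MMSEGerber}. The Lemma~\ref{lem:condMMSEGerber} step is tight iff $\Pr\left(X_{\pi(i)}=1\,|\,Y_{\pi(i-1)},\ldots,Y_{\pi(1)},W\right)\in\{0,\tfrac12,1\}$ for every $i$ and every realization; the MMSE step is tight (for $0<\alpha<1$) iff $\bX|W=w$ is memoryless for every $w$. Combining these forces $\bX|W=w$ to be memoryless with $\Pr(X_i=1|W=w)\in\{0,\tfrac12,1\}$ for every $i$ and $w$, as stated.

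I do not anticipate a genuine obstacle here, since everything is structurally identical to the unconditional case; the only point requiring a moment's care is verifying that the required independence of $Z_{\pi(i)}$ from the enlarged conditioning variable $\left(X_{\pi(i)},Y_{\pi(i-1)},\ldots,Y_{\pi(1)},W\right)$ really does follow from the hypothesis that $\bZ$ is mutually independent of $(\bX,W)$ — but this is immediate, as that conditioning variable is a deterministic function of $(\bX,W,Z_{\pi(1)},\ldots,Z_{\pi(i-1)})$, hence independent of $Z_{\pi(i)}$. Accordingly, I would simply state the proof as ``identical to that of Theorem~\ref{thm:MMSEGerber} with $W$ appended to all conditioning sets,'' in keeping with the paper's stated intention to only sketch the differences.
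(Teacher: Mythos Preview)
Your proposal is correct and matches the paper's own approach exactly: the paper omits the proof, stating that it ``follows the exact same steps as in the proof of Theorem~\ref{thm:MMSEGerber}, where the conditioning on $W$ is added where relevant.'' Your careful check that $Z_{\pi(i)}$ remains independent of the enlarged conditioning set is the one point worth noting, and you handle it correctly.
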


\begin{proof}
The proof is omitted as it follows the exact same steps as in the proof of Theorem~\ref{thm:MMSEGerber}, where the conditioning on $W$ is added where relevant.
\end{proof}

Next, we show that our lower bound can also be extended to the case of a binary noisy channel with memory. To that end, we first need to derive a simple generalization of Lemma~\ref{lem:condMMSEGerber}.

\begin{lemma}\label{lem:MMSEstationaryNoise}
Let $U=(T,W)$, where $T$ and $W$ are statistically independent. Let $X$ and $Z$ be conditionally independent given $U$, such that $X|U=(t,w)\sim\Ber(P_{t})$ and $Z|U=(t,w)\sim\Ber(\alpha_{w})$. Let $\mathrm{MMSE}(X|U)=\mathrm{MMSE}(X|T)$ be as defined in~\eqref{eq:MMSExu}. Then
\begin{align}
H(X\oplus Z|U)\geq H(Z|W)+\left(1-H(Z|W)\right)4\mathrm{MMSE}(X|T),\nonumber
\end{align}
with equality if and only if $P_{t}\in\{0,\tfrac{1}{2},1\}$ for any value of $t$.
\end{lemma}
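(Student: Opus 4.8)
The plan is to mimic the proof of Lemma~\ref{lem:condMMSEGerber}, tracking the extra independent randomness in the noise. First I would note that since $X$ and $Z$ are conditionally independent given $U=(T,W)$, and writing $P_U = P_T$ (depending only on $T$) and the noise bias $\alpha_U = \alpha_W$ (depending only on $W$), we have
\begin{align}
H(X\oplus Z\mid U) = \Expt\, h\!\left(P_T * \alpha_W\right),\nonumber
\end{align}
where the expectation is over the joint law of $(T,W)$, which by hypothesis is a product law $P_T\otimes P_W$. Setting $V_T \triangleq P_T - \tfrac12$, the identity $P_T*\alpha_W = \tfrac12 + (1-2\alpha_W)V_T$ still holds, so the Taylor expansion~\eqref{eq:entTaylor} gives, for each fixed $(t,w)$,
\begin{align}
h(P_t*\alpha_w) = 1 - \sum_{k=1}^\infty \frac{\log e}{2k(2k-1)}(1-2\alpha_w)^{2k}(2V_t)^{2k}.\nonumber
\end{align}

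The key step is that I must not bound $(2V_t)^{2k}\le (2V_t)^2$ before taking expectations, because I want to retain the exact dependence on the noise. Instead I would take the expectation over $W$ first, using independence of $T$ and $W$: for each fixed $t$,
\begin{align}
\Expt_W\, h(P_t*\alpha_W)
&= 1 - \sum_{k=1}^\infty \frac{\log e}{2k(2k-1)}(2V_t)^{2k}\,\Expt_W\!\left[(1-2\alpha_W)^{2k}\right]\nonumber\\
&\ge 1 - (2V_t)^2 \sum_{k=1}^\infty \frac{\log e}{2k(2k-1)}\,\Expt_W\!\left[(1-2\alpha_W)^{2k}\right],\nonumber
\end{align}
where the inequality uses $(2V_t)^{2k}\le (2V_t)^2$ for $k\ge1$ since $|2V_t|\le1$, with equality iff $V_t\in\{-\tfrac12,0,\tfrac12\}$. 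The remaining sum is exactly the Taylor expansion of $\Expt_W\, h\!\left(\tfrac12 + \tfrac{1-2\alpha_W}{2}\right)=\Expt_W\, h(\alpha_W) = H(Z\mid W)$, using that $(1-2\alpha_W)^{2k}$ appears with the same nonnegative coefficients and the series converges (dominated by the $\alpha_W\equiv$ worst-case term, all terms nonnegative). Hence $\Expt_W\, h(P_t*\alpha_W) \ge 1 - (2V_t)^2\big(1-H(Z\mid W)\big)$.

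Finally I would take the expectation over $T$ and recognize $4\Expt_T V_T^2 = 4\Expt_T(P_T-\tfrac12)^2 = 1 - 4\,\mathrm{MMSE}(X\mid T)$, giving
\begin{align}
H(X\oplus Z\mid U) \ge 1 - \big(1-H(Z\mid W)\big)\big(1 - 4\,\mathrm{MMSE}(X\mid T)\big) = H(Z\mid W) + \big(1-H(Z\mid W)\big)4\,\mathrm{MMSE}(X\mid T).\nonumber
\end{align}
The equality condition is inherited from the step $(2V_t)^{2k}\le(2V_t)^2$: since for $0<H(Z\mid W)<1$ the coefficients $\Expt_W[(1-2\alpha_W)^{2k}]$ are strictly positive for every $k\ge2$, equality forces $V_t\in\{-\tfrac12,0,\tfrac12\}$, i.e. $P_t\in\{0,\tfrac12,1\}$, for $P_T$-almost every $t$. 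The only real subtlety — and the step I would be most careful about — is the interchange of expectation over $W$ with the infinite sum; this is justified by Tonelli's theorem since every term is nonnegative (the coefficients $\tfrac{\log e}{2k(2k-1)}(2V_t)^{2k}$ are nonnegative and $(1-2\alpha_w)^{2k}\ge0$), so no integrability hypothesis beyond nonnegativity is needed, and the same nonnegativity lets me identify the resulting series with $H(Z\mid W)$ term by term.
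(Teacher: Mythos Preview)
Your proof is correct and follows essentially the same route as the paper: both use the Taylor expansion of $h(\cdot)$ around $\tfrac12$, the bound $(2V_t)^{2k}\le(2V_t)^2$, and the independence of $T$ and $W$ to factor the expectation. Your concern that one ``must not bound before taking expectations'' is unnecessary, though---the paper simply applies the pointwise inequality~\eqref{eq:convent}, namely $h(P_t*\alpha_w)\ge 1-4V_t^2(1-h(\alpha_w))$, for each fixed $(t,w)$ and then factors $\Expt_{T,W}\big[4V_T^2(1-h(\alpha_W))\big]=4\Expt_T V_T^2\,(1-\Expt_W h(\alpha_W))$ by independence, arriving at the same bound.
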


\begin{proof}[Sketch of proof]
The proof follows the same lines as the proof of Lemma~\ref{lem:condMMSEGerber}. Since $T$ and $W$ are statistically independent, we have $H(X\oplus Z|U)=\Expt h(P_T*\alpha_W)$. By~\eqref{eq:convent} we have that
\begin{align}
h(P_T*\alpha_W)\geq 1-4 \left(\frac{1}{2}-P_T\right)^2\left(1-h(\alpha_W)\right)\nonumber,
\end{align}
We therefore have
\begin{align}
\Expt_U h(P_T*\alpha_W)&\geq \Expt_U \left(1-4 \left(\frac{1}{2}-P_T\right)^2\left(1-h(\alpha_W)\right)\right)\nonumber\\
&=1-4\Expt_T \left(\frac{1}{2}-P_T\right)^2\left(1-\Expt_W h(\alpha_W)\right),\nonumber
\end{align}
and the lemma follows by recalling that $4\Expt_T \left(\frac{1}{2}-P_T\right)^2=1-4\mathrm{MMSE}(X|T)$ and that $\Expt_W h(\alpha_W)=H(Z|W)$.
\end{proof}

As a simple corollary, we obtain the following.
\begin{theorem}\label{thm:StationaryNoise}
Let $\bX,\bZ$ be two statistically independent $n$-dimensional random binary vectors, and let $\bY=\bX\oplus\bZ$. Then
\begin{align}
H(\bY)&\geq \max_{\bpi}\bigg\{H(\bZ)+4{\mmse}_{\bpi}(\bX)\nonumber\\
&-4\sum_{i=1}^n H\left(Z_{\pi(i)}|Z_{\pi(i-1)},\ldots,Z_{\pi(1)}\right)\nonumber\\
& \ \ \ \ \ \ \ \   \cdot\mathrm{MMSE}\left(X_{\pi(i)}|X_{\pi(i-1)},\ldots,X_{\pi(1)}\right)\bigg\},\nonumber
\end{align}
with equality if and only if $\bZ$ is memoryless and $\bX$ is memoryless with $\Pr(X_i=1)\in\{0,\tfrac{1}{2},1\}$ for every $i=1,\ldots,n$.
\end{theorem}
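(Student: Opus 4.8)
The plan is to obtain Theorem~\ref{thm:StationaryNoise} as a corollary of Lemma~\ref{lem:MMSEstationaryNoise} in exactly the same way Theorem~\ref{thm:MMSEGerber} follows from Lemma~\ref{lem:condMMSEGerber}, by combining the chain rule for entropy with a carefully chosen conditioning structure. First I would fix an arbitrary permutation $\bpi$ and write, by the chain rule, $H(\bY)=\sum_{i=1}^n H(Y_{\pi(i)}\mid Y_{\pi(i-1)},\ldots,Y_{\pi(1)})$. The idea is to identify, for each $i$, the roles played by $T$, $W$ and $U$ in Lemma~\ref{lem:MMSEstationaryNoise}: we want to condition on enough of the past so that $X_{\pi(i)}$ and $Z_{\pi(i)}$ become conditionally independent with the required product form. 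A natural choice is to further condition on the past noise symbols $Z_{\pi(i-1)},\ldots,Z_{\pi(1)}$ and on the past clean symbols $X_{\pi(i-1)},\ldots,X_{\pi(1)}$: then, writing $T=(X_{\pi(i-1)},\ldots,X_{\pi(1)})$ and $W=(Z_{\pi(i-1)},\ldots,Z_{\pi(1)})$, the independence of $\bX$ and $\bZ$ makes $T$ and $W$ independent, while $X_{\pi(i)}\mid T,W$ depends only on $T$ and $Z_{\pi(i)}\mid T,W$ depends only on $W$.

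Concretely I would argue that conditioning reduces entropy, $H(Y_{\pi(i)}\mid Y_{\pi(i-1)},\ldots,Y_{\pi(1)})\geq H(Y_{\pi(i)}\mid Y_{\pi(i-1)},\ldots,Y_{\pi(1)}, X_{\pi(i-1)},\ldots,X_{\pi(1)}, Z_{\pi(i-1)},\ldots,Z_{\pi(1)})$, and that the conditioning set on the right is equivalent to conditioning on $(T,W)=(X_{\pi(i-1)},\ldots,X_{\pi(1)}, Z_{\pi(i-1)},\ldots,Z_{\pi(1)})$ since $Y$-past is a deterministic function of $X$-past and $Z$-past. Applying Lemma~\ref{lem:MMSEstationaryNoise} with this $(T,W)$ gives
\begin{align}
&H\!\left(Y_{\pi(i)}\mid T,W\right)\nonumber\\
&\geq H\!\left(Z_{\pi(i)}\mid W\right)+\left(1-H\!\left(Z_{\pi(i)}\mid W\right)\right)4\,\mathrm{MMSE}\!\left(X_{\pi(i)}\mid T\right),\nonumber
\end{align}
where $H(Z_{\pi(i)}\mid W)=H(Z_{\pi(i)}\mid Z_{\pi(i-1)},\ldots,Z_{\pi(1)})$ and $\mathrm{MMSE}(X_{\pi(i)}\mid T)=\mathrm{MMSE}(X_{\pi(i)}\mid X_{\pi(i-1)},\ldots,X_{\pi(1)})$. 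Summing over $i$, the $\sum_i H(Z_{\pi(i)}\mid Z_{\pi(i-1)},\ldots,Z_{\pi(1)})$ telescopes via the chain rule to $H(\bZ)$, the $\sum_i \mathrm{MMSE}(X_{\pi(i)}\mid X_{\pi(i-1)},\ldots,X_{\pi(1)})$ term is exactly $\mmse_{\bpi}(\bX)$ by definition~\eqref{eq:defMMSEpi}, and the cross term is precisely the subtracted sum in the statement. Since $\bpi$ was arbitrary, I take the maximum over $\bpi$ to obtain the claimed bound.

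For the equality condition I would track the two inequalities used. The MMSE inequality inside Lemma~\ref{lem:MMSEstationaryNoise} is tight iff $P_t\in\{0,\tfrac12,1\}$ for all $t$, i.e. $\Pr(X_{\pi(i)}=1\mid X_{\pi(i-1)},\ldots,X_{\pi(1)})\in\{0,\tfrac12,1\}$ for every realization; combined over all $i$ and all permutations this forces $\bX$ memoryless with $\Pr(X_i=1)\in\{0,\tfrac12,1\}$, mirroring the argument at the end of the proof of Theorem~\ref{thm:MMSEGerber}. The conditioning-reduces-entropy step $H(Y_{\pi(i)}\mid Y\text{-past})\geq H(Y_{\pi(i)}\mid X\text{-past},Z\text{-past})$ is tight iff $Y_{\pi(i)}$ is conditionally independent of the extra information $(X\text{-past},Z\text{-past})$ given the $Y$-past; when $\bX$ has the degenerate form above, the only way this can fail is through memory in $\bZ$, so the extra equality requirement is that $\bZ$ be memoryless. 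I expect the main obstacle to be the bookkeeping in this last equality analysis — making precise that the two tightness conditions interact exactly as claimed and that no further constraints are hidden — but this parallels the corresponding argument already carried out for Theorem~\ref{thm:MMSEGerber}, so it should go through with the conditioning on the noise past added where relevant.
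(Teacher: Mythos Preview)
Your proposal is correct and follows essentially the same route as the paper: chain rule on $H(\bY)$, then drop to conditioning on $(X\text{-past},Z\text{-past})$ via ``conditioning reduces entropy,'' identify $T=X\text{-past}$ and $W=Z\text{-past}$, apply Lemma~\ref{lem:MMSEstationaryNoise} termwise, and sum. Your treatment of the equality case is slightly more explicit than the paper's (which merely asserts when \eqref{eq:ineqTQ} is tight), but the logic is the same.
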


\begin{proof}
By the chain rule for entropy, for any permutation $\bpi$ we have
\begin{align}
&H(\bY)=\sum_{i=1}^n H\left(Y_{\pi(i)}|Y_{\pi(i-1)},\ldots,Y_{\pi(1)}\right)\nonumber\\
&\geq\sum_{i=1}^n H\left(Y_{\pi(i)}|X_{\pi(i-1)},\ldots,X_{\pi(1)},Z_{\pi(i-1)},\ldots,Z_{\pi(1)}\right)\label{eq:ineqTQ}\\
&=\sum_{i=1}^n H\left(X_{\pi(i)}\oplus Z_{\pi(i)}|T_{\bpi}^i,W_{\bpi}^i\right)\label{eq:chain2}
\end{align}
where the random variables
\begin{align}
T_{\bpi}^i&\triangleq\left(X_{\pi(i-1)},\ldots,X_{\pi(1)}\right)\nonumber\\
W_{\bpi}^i&\triangleq\left(Z_{\pi(i-1)},\ldots,Z_{\pi(1)}\right)\nonumber
\end{align}
are statistically independent, and $X_{\pi(i)}$ and $Z_{\pi(i)}$ are conditionally independent given $(T_{\bpi}^i,W_{\bpi}^i)$,  since $\bX$ and $\bZ$ are statistically independent. The inequality~\eqref{eq:ineqTQ} is tight if and only if $\bX$ and $\bY$ are both memoryless. Now, by Lemma~\ref{lem:MMSEstationaryNoise} we have that
\begin{align}
&H\bigg(X_{\pi(i)}\oplus Z_{\pi(i)}|T_{\bpi}^i,W_{\bpi}^i\bigg)\nonumber\\
&\geq H(Z_{\pi(i)}|W^i_{\bpi})
+\left(1-H(Z_{\pi(i)}|W^i_{\bpi})\right)4\mmse\left(X_{\pi(i)}|T_{\bpi}^i\right).\nonumber
\end{align}
Summing over $i$ gives the desired result.
\end{proof}

A simple consequence of Theorem~\ref{thm:StationaryNoise} is that if $\bX$ and $\bZ$ are statistically independent binary symmetric first-order Markov processes with transition probabilities $q_1$ and $q_2$, respectively, then $\tfrac{1}{n}H(\bY)\geq h(q_1)+4q_2(1-q_2)(1-h(q_1))$. This bound uses the identity permutation $\bpi=(1,\ldots,n)$. We note that a more clever choice of $\bpi$, as used in Section~\ref{sec:HMM}, can result in a better bound.

We end this section by deriving an upper bound on $H(\bY)$ in terms of the best-case MMSE predictability of $\bX$ from $\bY$
\begin{align}
\underline{\mmse}(\bX|\bY)\triangleq \min_{\bpi}\sum_{i=1}^n\mathrm{MMSE}\left(X_{\pi(i)}|Y_{\pi(i-1)},\ldots,Y_{\pi(1)}\right).\nonumber
\end{align}
To that end, we first upper bound $H(X\oplus Z|U)$ in terms of $\text{MMSE}(X|U)$.
\begin{lemma}\label{lem:upperMMSE}
Let $U$ be some random variable and let $X|U=u\sim\Ber(P_u)$. Let $Z\sim\Ber(\alpha)$ be statistically independent of $(X,U)$.
Then
\begin{align}
H(X\oplus Z|U)\leq h\left(\frac{1}{2}+\frac{1-2\alpha}{2}\sqrt{1-4\mathrm{MMSE}(X|U)} \right),\nonumber
\end{align}
with equality if and only if $\left|P_u-\frac{1}{2}\right|$ does not depend on $u$.
\end{lemma}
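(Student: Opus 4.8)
The plan is to follow the proof of Lemma~\ref{lem:condMMSEGerber} almost verbatim, but to replace the term-by-term lower bound \eqref{eq:VuIneq} on the Taylor series by a single application of Jensen's inequality in the \emph{opposite} direction, exploiting a hidden concavity of the binary entropy function.

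First I would write, exactly as in \eqref{eq:condEnt}, $H(X\oplus Z|U)=\Expt h(\alpha * P_U)$, and then invoke the identity \eqref{eq:conviden}, namely $\alpha * P_U = \tfrac12 + (1-2\alpha)V_U$ with $V_U \triangleq P_U - \tfrac12$. The key observation is that the map $\phi:[0,1]\mapsto\RR$ defined by $\phi(t)\triangleq h\big(\tfrac12 + \tfrac12\sqrt{t}\big)$ is concave — indeed strictly concave. This is read off immediately from \eqref{eq:entTaylor} after the substitution $t=p^2$: one gets $\phi(t) = 1 - \sum_{k=1}^{\infty}\frac{\log e}{2k(2k-1)}t^k$, whose second derivative in $t$ equals $-\sum_{k=2}^{\infty}\frac{\log e\,(k-1)}{2k(2k-1)}t^{k-2}$, which is $\le 0$ on $[0,1]$ and in fact strictly negative on $[0,1)$ because of the $k=2$ term. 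Since $|V_U|\le\tfrac12$ we have $4(1-2\alpha)^2 V_U^2\in[0,1]$, and using $h(\tfrac12+x)=h(\tfrac12-x)$ one has the exact identity $h(\alpha * P_U)=\phi\big(4(1-2\alpha)^2 V_U^2\big)$. Jensen's inequality then gives
\begin{align}
H(X\oplus Z|U) = \Expt\,\phi\big(4(1-2\alpha)^2 V_U^2\big) \le \phi\big(4(1-2\alpha)^2\,\Expt V_U^2\big).\nonumber
\end{align}

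It then remains to substitute the elementary identity $4\Expt V_U^2 = 4\Expt\big(\tfrac12-P_U\big)^2 = 1 - 4\Expt\big(P_U(1-P_U)\big) = 1 - 4\,\mathrm{MMSE}(X|U)$ (already used in the proof of Lemma~\ref{lem:condMMSEGerber}), and to unfold the definition of $\phi$. Using $\sqrt{(1-2\alpha)^2}=|1-2\alpha|$ together with the symmetry of $h$ about $\tfrac12$, the right-hand side becomes $h\big(\tfrac12 + \tfrac{1-2\alpha}{2}\sqrt{1-4\,\mathrm{MMSE}(X|U)}\big)$, which is exactly the claimed bound. For the equality condition, strict concavity of $\phi$ on $[0,1]$ forces Jensen to be tight if and only if the random variable $4(1-2\alpha)^2 V_U^2$ is almost surely constant; when $\alpha\ne\tfrac12$ this is equivalent to $(P_U-\tfrac12)^2$ being constant, i.e. to $|P_u-\tfrac12|$ not depending on $u$ (and when $\alpha=\tfrac12$ both sides of the bound are trivially equal to $1$).

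The only genuinely non-mechanical point — and the place to be careful — is the verification that $t\mapsto h\big(\tfrac12+\tfrac12\sqrt t\big)$ is (strictly) concave on $[0,1]$; once one thinks to make the change of variable $t=p^2$, this is immediate from the sign pattern of the Taylor coefficients in \eqref{eq:entTaylor}, and everything else is exactly the bookkeeping already carried out in Lemma~\ref{lem:condMMSEGerber}. I do not anticipate any other obstacle.
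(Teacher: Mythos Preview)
Your proposal is correct and essentially identical to the paper's own proof: the paper defines $Q(t)\triangleq h\big(\tfrac12+\sqrt{t}\big)$ on $[0,\tfrac14]$ (your $\phi$ after the rescaling $t\mapsto t/4$), asserts its concavity, and applies Jensen exactly as you do. Your version is slightly more thorough in that you actually verify the concavity via the Taylor expansion and discuss the $\alpha=\tfrac12$ degenerate case for the equality condition, but the argument is the same.
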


\begin{proof}
Define the function $Q(t)\triangleq h\left(\tfrac{1}{2}+\sqrt{t}\right)$ and note that it is concave over $[0,\tfrac{1}{4}]$. By~\eqref{eq:condEnt} and~\eqref{eq:conviden} we have
\begin{align}
H(X\oplus Z|U)&=\Expt h\left(\frac{1}{2}+(1-2\alpha)\left(P_U-\frac{1}{2}\right) \right)\nonumber\\
&=\Expt h\left(\frac{1}{2}+\sqrt{(1-2\alpha)^2\left(P_U-\frac{1}{2}\right)^2} \right)\nonumber\\
&\leq h\left(\frac{1}{2}+\sqrt{\Expt\left[(1-2\alpha)^2\left(P_U-\frac{1}{2}\right)^2\right]} \right)\nonumber\\
&= h\left(\frac{1}{2}+(1-2\alpha)\sqrt{\Expt\left(\frac{1}{4}-P_u(1-P_U)\right)} \right)\nonumber\\
&= h\left(\frac{1}{2}+\frac{1-2\alpha}{2}\sqrt{1-4\text{MMSE}(X|U)} \right)\nonumber,
\end{align}
as desired.
\end{proof}

\begin{remark}
In the special case where $\alpha=0$, Lemma~\ref{lem:upperMMSE} reduces to the inequality
\begin{align}
\Expt h(P_U)\leq h\left(\frac{1}{2}+\sqrt{\Expt\left(\frac{1}{2}-P_U)^2 \right)} \right),\nonumber
\end{align}
which was obtained by Wyner in~\cite[eq. (3.11)]{Wyner75}
\label{rem:entUpper}
\end{remark}

The function $F_{\alpha}(x)\triangleq h\left(\tfrac{1}{2}+\tfrac{1-2\alpha}{2}\sqrt{1-4x}\right)$ is concave and monotone non-decreasing for $x\in[0,\tfrac{1}{4}]$ and any value of $\alpha\in[0,\tfrac{1}{2}]$. Combining this with~\eqref{eq:chain} and with Lemma~\ref{lem:upperMMSE} gives the following.
\begin{theorem}\label{thm:uperVec}
Let $\bX,\bZ$ be two statistically independent $n$-dimensional random binary vectors, where $\bX$ is arbitrary and $\bZ$ is i.i.d. $\Ber(\alpha)$. Let $\bY=\bX\oplus\bZ$. Then
\begin{align}
\frac{1}{n}H(\bY)\leq h\left(\frac{1}{2}+\frac{1-2\alpha}{2}\sqrt{1-4\frac{\underline{\mmse}(\bX|\bY)}{n}}\right),\nonumber
\end{align}
with equality if and only if $\bX$ is i.i.d.
\end{theorem}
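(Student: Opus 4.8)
The plan is to mirror the proof of Theorem~\ref{thm:MMSEGerber}, replacing the \emph{linear} scalar bound of Lemma~\ref{lem:condMMSEGerber} by the \emph{concave} scalar bound of Lemma~\ref{lem:upperMMSE}. Fix an arbitrary permutation $\bpi$. The entropy chain rule gives $H(\bY)=\sum_{i=1}^{n}H(Y_{\pi(i)}|Y_{\pi(i-1)},\ldots,Y_{\pi(1)})$. For each $i$ I would apply Lemma~\ref{lem:upperMMSE} with $U=(Y_{\pi(i-1)},\ldots,Y_{\pi(1)})$; this is legitimate since $Z_{\pi(i)}$ is statistically independent of $(X_{\pi(i)},U)$, because $(X_{\pi(i)},U)$ is a deterministic function of $X_{\pi(1)},\ldots,X_{\pi(i)},Z_{\pi(1)},\ldots,Z_{\pi(i-1)}$, each of which is independent of $Z_{\pi(i)}$ (here both the independence of $\bZ$ from $\bX$ and the i.i.d.\ structure of $\bZ$ are used). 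This yields $H(Y_{\pi(i)}|Y_{\pi(i-1)},\ldots,Y_{\pi(1)})\le F_{\alpha}\!\left(\mathrm{MMSE}(X_{\pi(i)}|Y_{\pi(i-1)},\ldots,Y_{\pi(1)})\right)$.

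Summing over $i$, and using that $F_{\alpha}$ is concave on $[0,\tfrac14]$ while each conditional MMSE (hence also the average) lies in $[0,\tfrac14]$, Jensen's inequality gives $\tfrac1n H(\bY)\le F_{\alpha}\!\left(\tfrac1n\sum_{i=1}^{n}\mathrm{MMSE}(X_{\pi(i)}|Y_{\pi(i-1)},\ldots,Y_{\pi(1)})\right)$. Since this holds for \emph{every} permutation $\bpi$ and $F_{\alpha}$ is monotone non-decreasing, I may minimize the right-hand side over $\bpi$: the minimum over $\bpi$ of the inner sum is $\underline{\mmse}(\bX|\bY)$ by definition, and monotonicity lets that minimum pass through $F_{\alpha}$, producing $\tfrac1n H(\bY)\le F_{\alpha}\!\left(\tfrac1n\underline{\mmse}(\bX|\bY)\right)$, which is the claimed bound after unwinding the definition of $F_{\alpha}$.

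For the equality claim, the ``if'' direction is a short computation: if $\bX$ is i.i.d.\ $\Ber(p)$ then $\bY$ is i.i.d.\ $\Ber(p*\alpha)$, each $X$-coordinate is independent of the remaining $Y$-coordinates so every conditional MMSE equals $p(1-p)$, and $F_{\alpha}(p(1-p))=h(p*\alpha)$ by the identity $1-4p(1-p)=(1-2p)^{2}$; hence the bound is met. For the ``only if'' direction (with $0<\alpha<\tfrac12$, where $F_{\alpha}$ is strictly concave — the case $\alpha=\tfrac12$ being degenerate) I would observe that equality in the theorem forces equality in the displayed chain for \emph{every} $\bpi$ simultaneously; strict concavity then makes all the per-coordinate conditional MMSEs $\mathrm{MMSE}(X_{\pi(i)}|Y_{\pi(i-1)},\ldots,Y_{\pi(1)})$ equal for every ordering, which by the law of total variance makes each $X_{\pi(i)}$ independent of $(Y_{\pi(1)},\ldots,Y_{\pi(i-1)})$; a deconvolution argument (the BSC being non-degenerate for $\alpha<\tfrac12$) then transfers this to $X_{\pi(i)}\perp(X_{\pi(1)},\ldots,X_{\pi(i-1)})$ for all $\bpi,i$, forcing $\bX$ to be a product distribution with equal per-coordinate variances, i.e.\ i.i.d.

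I expect the routine part to be the inequality itself (chain rule, Lemma~\ref{lem:upperMMSE}, Jensen, monotonicity), and the genuine subtleties to lie in two places. The first is verifying the hypothesis of Lemma~\ref{lem:upperMMSE}, namely $Z_{\pi(i)}\perp(X_{\pi(i)},Y_{\pi(i-1)},\ldots,Y_{\pi(1)})$: this requires $\bZ$ to be i.i.d., not merely independent of $\bX$, and is exactly the place where this upper bound — unlike the lower bounds of Theorems~\ref{thm:MMSEGerber} and~\ref{thm:StationaryNoise} — does not extend to channels with memory. The second, and more delicate, is the converse equality analysis: leveraging the fact that equality must hold across all permutations, and pushing independence back from the $Y$-coordinates to the $X$-coordinates, is where essentially all the work of the ``only if'' direction is concentrated.
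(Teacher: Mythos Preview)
Your derivation of the inequality is correct and coincides with the paper's (one-line) argument: chain rule~\eqref{eq:chain}, then Lemma~\ref{lem:upperMMSE} applied termwise with $U=(Y_{\pi(1)},\ldots,Y_{\pi(i-1)})$, then Jensen via the concavity of $F_\alpha$, and finally monotonicity of $F_\alpha$ to pass the minimum over $\bpi$ inside. Your verification that $Z_{\pi(i)}\perp (X_{\pi(i)},U)$, and in particular that the i.i.d.\ assumption on $\bZ$ is essential here, is exactly the point the paper leaves implicit.

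The one place where your sketch does not quite go through is the ``only if'' direction. You assert that equality in the theorem forces equality in the displayed chain for \emph{every} permutation $\bpi$; but the theorem's right-hand side is $F_\alpha$ evaluated at the \emph{minimum} over $\bpi$, so equality there yields equality in the chain only for a minimizing $\bpi^\star$. For non-minimizing $\bpi$ the right-hand side is strictly larger and no equality is forced. Hence your subsequent step---using ``equal conditional MMSEs across all orderings'' together with the law of total variance to obtain independence---is not yet justified. Working only with $\bpi^\star$, Jensen's equality and the equality condition in Lemma~\ref{lem:upperMMSE} give that each $\Pr(X_{\pi^\star(i)}{=}1\mid Y_{\pi^\star(1)}^{i-1})$ takes values in $\{p,1-p\}$ for a common $p$, which by itself does not imply independence; an additional argument is needed. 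The paper does not supply the equality analysis either, so this is less a divergence from the paper than a spot where both accounts leave work to the reader.
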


\section{Comparison with Mrs. Gerber's Lemma}
\label{sec:comparison}

In this section we compare the performance of our MMSE-based bound to Mrs. Gerber's Lemma. First, we consider the family of random vectors with fixed $\overline{\mmse}(\bX)$. Clearly, the bound from Theorem~\ref{thm:MMSEGerber} is the same for all members of this family. However, the entropy $H(\bX)$ may vary within the family, and hence applying Mrs. Gerber's Lemma results in a range of bounds, which can be juxtaposed with the bound of Theorem~\ref{thm:MMSEGerber}. Similarly, we fix $H(\bX)$ and juxtapose Mrs. Gerber's Lemma with the range of bounds obtained by applying Theroem~\ref{thm:MMSEGerber}.

For the special case of $\alpha=0$, Theorem~\ref{thm:MMSEGerber} reads
\begin{align}
H(\bX)\geq 4\overline{\mmse}(\bX).\label{eq:EntMMSELower}
\end{align}
and Theorem~\ref{thm:uperVec} reads
\begin{align}
H(\bX)\leq n h\left(\frac{1}{2}+\frac{1}{2}\sqrt{1-4\frac{\underline{\mmse}(\bX)}{n}} \right)\nonumber\\
\leq n h\left(\frac{1}{2}+\frac{1}{2}\sqrt{1-4\frac{\overline{\mmse}(\bX)}{n}} \right).\label{eq:EntMMSEUpper}
\end{align}

Denote the RHS of~\eqref{eq:condMrsGerber} by
\begin{align}
\mathsf{MGL}\left(\alpha,P_{\bX}\right)\triangleq h\left(\alpha*h^{-1}\left(\frac{H(\bX)}{n}\right) \right).\nonumber
\end{align}
and the RHS of~\eqref{eq:mainBound} by
\begin{align}
\mathsf{NEW}\left(\alpha,P_{\bX}\right)\triangleq h(\alpha)+\left(1-h(\alpha)\right)4\frac{\overline{\mmse}(\bX)}{n},\nonumber
\end{align}
By~\eqref{eq:EntMMSEUpper} and~\eqref{eq:EntMMSELower} it follows that
\begin{align}
h\bigg(\alpha*&h^{-1}\left(4\frac{\overline{\mmse}(\bX)}{n}\right) \bigg)\leq \mathsf{MGL}\left(\alpha,P_{\bX}\right)   \nonumber\\
&\leq  h\left(\alpha*\left(\frac{1}{2}+\frac{1}{2}\sqrt{1-4\frac{\overline{\mmse}(\bX)}{n}} \right) \right)\label{eq:MGLbounds}
\end{align}
Figure~\ref{fig:MGLcomp} depicts the lower and upper bound on $\mathsf{MGL}\left(\alpha,P_{\bX}\right)$ from~\eqref{eq:MGLbounds} as a function of $\overline{\mmse}(\bX)$ along with $\mathsf{NEW}\left(\alpha,P_{\bX}\right)$, for $\alpha=0.11$. It is seen that for all values of $\overline{\mmse}(\bX)$ our bound is quite close to the upper bound on $\mathsf{MGL}\left(\alpha,P_{\bX}\right)$, and is often significantly stronger than the lower bound on $\mathsf{MGL}\left(\alpha,P_{\bX}\right)$. In general, for small values of $\alpha$, $\mathsf{NEW}\left(\alpha,P_{\bX}\right)$ will be close to the lower bound on $\mathsf{MGL}\left(\alpha,P_{\bX}\right)$ and will approach the upper bound on $\mathsf{MGL}\left(\alpha,P_{\bX}\right)$ as $\alpha$ increases. Figure~\ref{fig:MGLcompAlpha} demonstrates this phenomenon for $4\overline{\mmse}(\bX)=0.5$.

\begin{figure*}[!ht]
\begin{center}
\subfloat[$\alpha=0.11$]{
\includegraphics[width=0.45\textwidth]{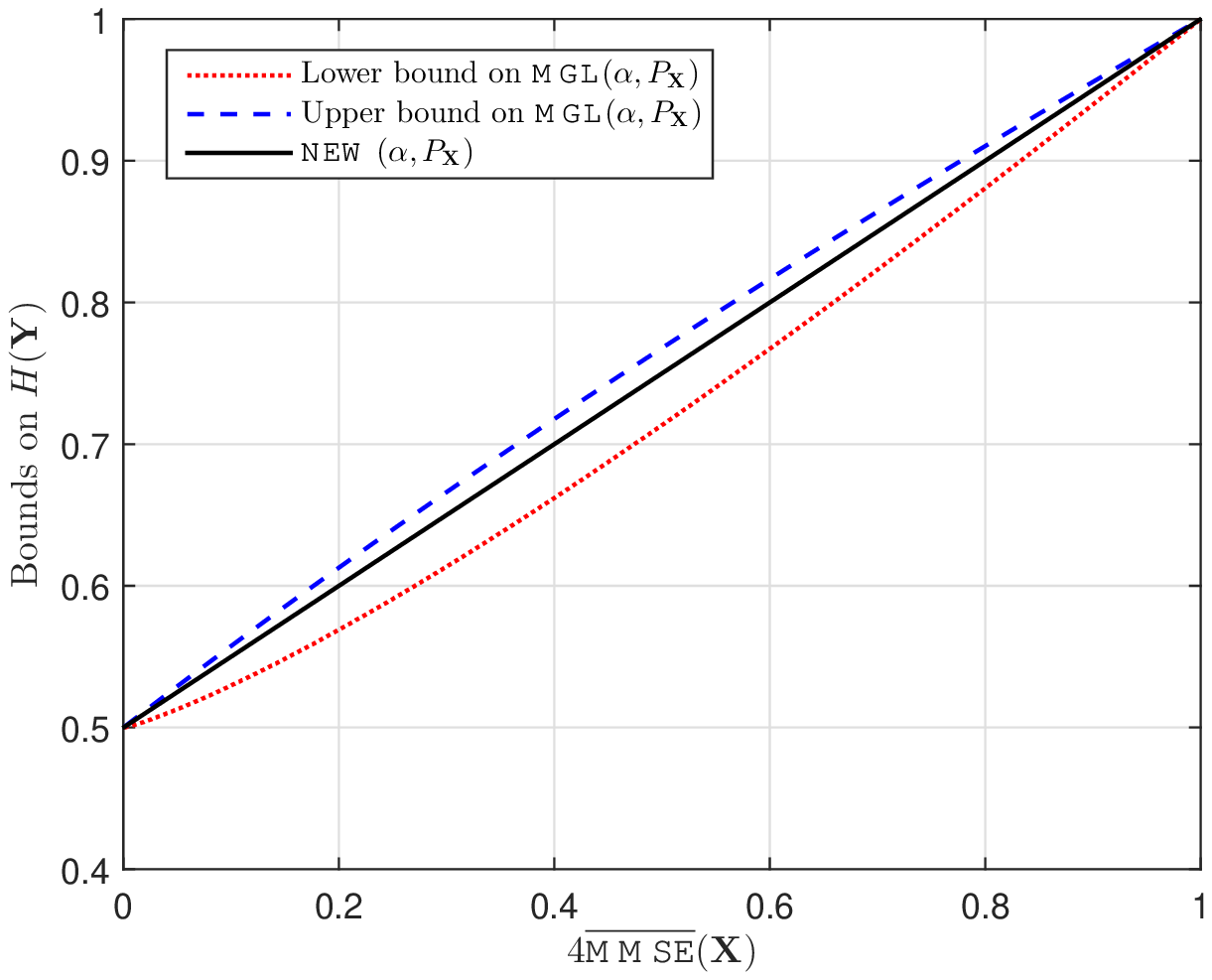}
\label{fig:MGLcomp}}
\qquad
\subfloat[$\overline{\mmse}(\bX)=0.5$]{
\includegraphics[width=0.45\textwidth]{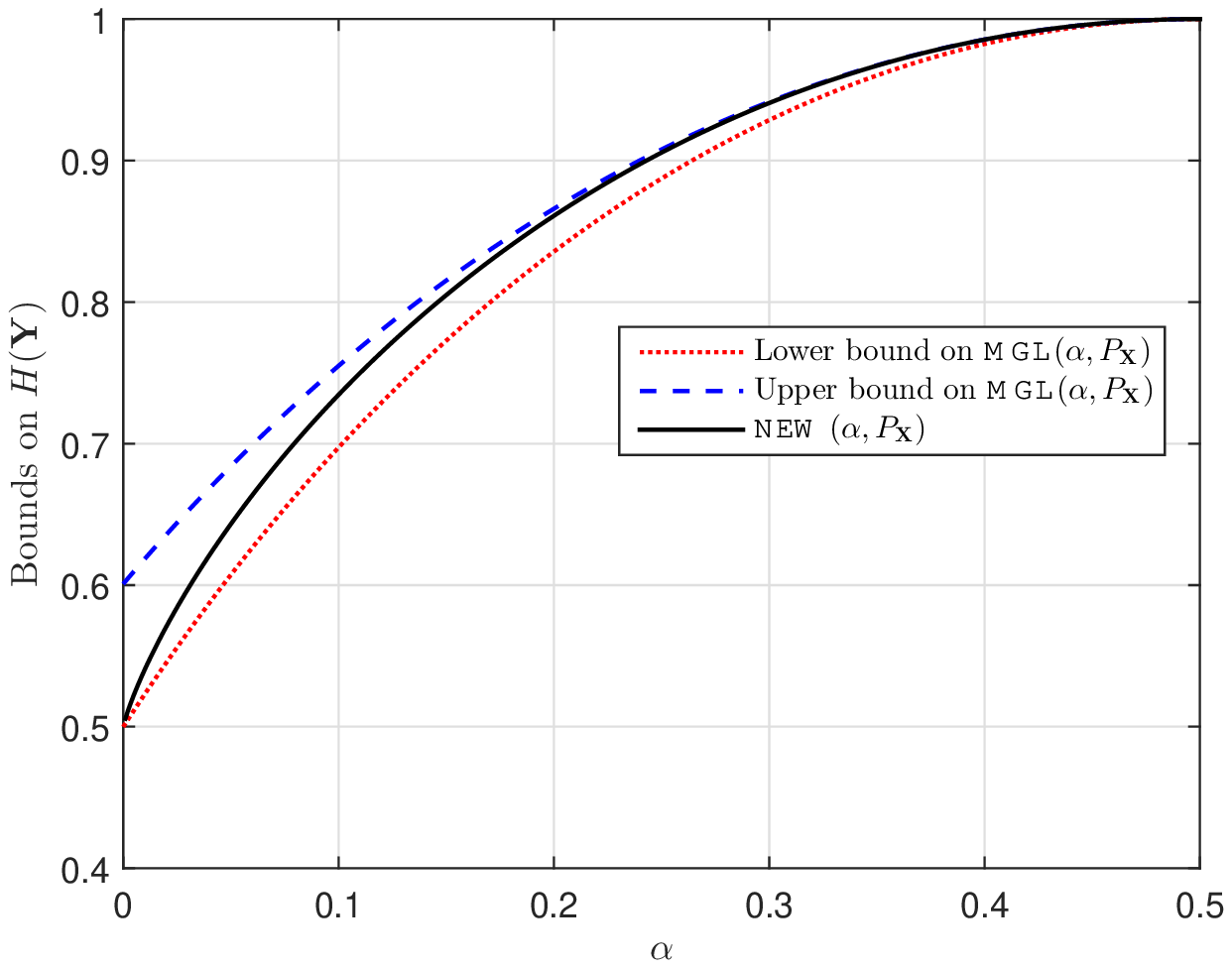}
\label{fig:MGLcompAlpha}}

\end{center}
\caption{Comparison between the lower and upper bounds on $\mathsf{MGL}\left(\alpha,P_{\mathbf{X}}\right)$ from~\eqref{eq:MGLbounds} and $\mathsf{NEW}\left(\alpha,P_{\bX}\right)$.}
\end{figure*}

Equivalently, by~\eqref{eq:EntMMSELower} and~\eqref{eq:EntMMSEUpper}, we also have that
\begin{align}
4n h^{-1}&\left(\frac{H(\bX)}{n}\right)\left(1-h^{-1}\left(\frac{H(\bX)}{n}\right)\right)\nonumber\\
&\leq 4\overline{\mmse}(\bX)
\leq H(\bX).\label{eq:MMSEbounds}
\end{align}
In fact,~\eqref{eq:MMSEbounds} holds for $4\mmse_{\bpi}(\bX)$ with any permutation $\bpi$, and implies
\begin{align}
h(\alpha)+&(1-h(\alpha)4 h^{-1}\left(\frac{H(\bX)}{n}\right)\left(1-h^{-1}\left(\frac{H(\bX)}{n}\right)\right)\nonumber\\
&\leq \mathsf{NEW}\left(\alpha,P_{\bX}\right)
\leq h(\alpha)+(1-h(\alpha)H(\bX)\label{eq:NEWbounds}
\end{align}

Figure~\ref{fig:Newcomp} depicts the lower and upper bound on $\mathsf{NEW}\left(\alpha,P_{\bX}\right)$ from~\eqref{eq:NEWbounds} as a function of $H(\bX)$ along with $\mathsf{MGL}\left(\alpha,P_{\mathbf{X}}\right)$, for $\alpha=0.11$. It is seen that for all values of $H(\bX)$, $\mathsf{MGL}\left(\alpha,P_{\mathbf{X}}\right)$ is quite close to the lower bound on $\mathsf{NEW}\left(\alpha,P_{\bX}\right)$, and is often significantly weaker than the upper bound on $\mathsf{NEW}\left(\alpha,P_{\bX}\right)$. In general, for small values of $\alpha$, $\mathsf{MGL}\left(\alpha,P_{\mathbf{X}}\right)$ will be close to the upper bound on $\mathsf{NEW}\left(\alpha,P_{\bX}\right)$ and will approach the lower bound on $\mathsf{NEW}\left(\alpha,P_{\bX}\right)$ as $\alpha$ increases. Figure~\ref{fig:NewcompAlpha} demonstrates this phenomenon for $H(\bX)=0.5$.

\begin{figure*}[!ht]
\begin{center}
\subfloat[$\alpha=0.11$]{
\includegraphics[width=0.45\textwidth]{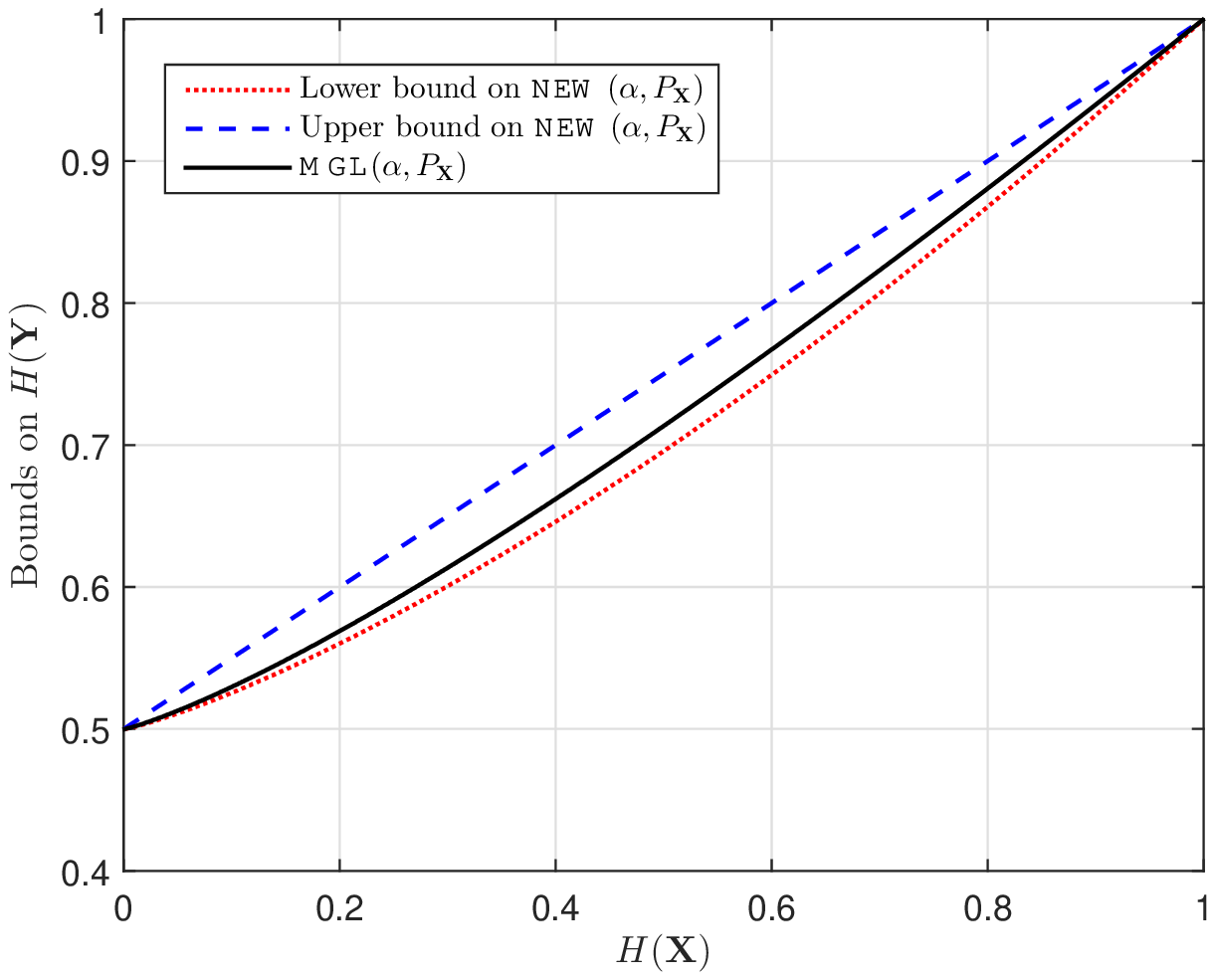}
\label{fig:Newcomp}}
\qquad
\subfloat[$H(\bX)=0.5$]{
\includegraphics[width=0.45\textwidth]{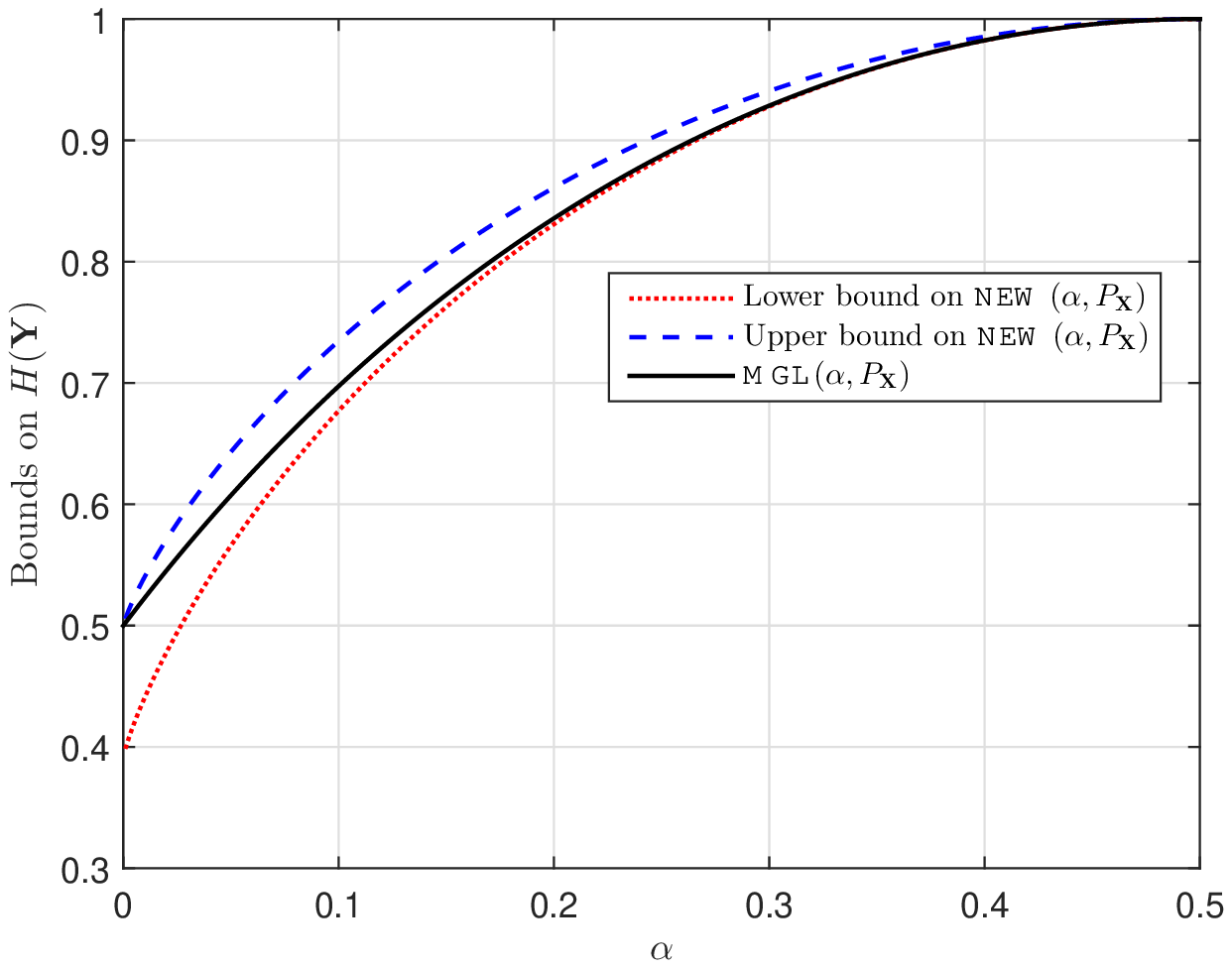}
\label{fig:NewcompAlpha}}

\end{center}
\caption{Comparison between the lower and upper bounds on $\mathsf{NEW}\left(\alpha,P_{\mathbf{X}}\right)$ from~\eqref{eq:NEWbounds} and $\mathsf{MGL}\left(\alpha,P_{\mathbf{X}}\right)$.}
\end{figure*}

\section{Application: Lower Bound on the Entropy Rate of a Binary Hidden Markov Process}
\label{sec:HMM}

In this section we apply Theorem~\ref{thm:MMSEGerber} to derive a simple lower bound on the entropy rate of a binary hidden Markov process. Let $X_1\sim\Ber\left(\frac{1}{2}\right)$ and for $m=2,3,\ldots$ let $X_m=X_{m-1}\oplus W_m$ where $\{W_m\}$ is an i.i.d. $\Ber(q)$ process statistically independent of $X_1$. Clearly, the process $\{X_n\}$ is a symmetric first-order Markov Process. We define the hidden Markov process $Y_n=X_n\oplus Z_n$, where $\{Z_n\}$ is an i.i.d. $\Ber(\alpha)$ process statistically independent of the process $\{X_n\}$. Our goal in this section is to derive a lower bound on the entropy rate of $\{Y_n\}$ defined as
\begin{align}
\overline{H}(Y)\triangleq\lim_{n\rightarrow\infty}\frac{H(Y_1,\ldots,Y_n)}{n}.
\end{align}
One very simple bound can be obtained by noting that $\overline{H}(X)=h(q)$ and applying Mrs. Gerber's Lemma~\eqref{eq:MrsGerber} which gives
\begin{align}
\overline{H}(Y)\geq h(\alpha*q).\label{eq:HMPgerber}
\end{align}
We will see that in many cases our MMSE-based bound from Theorem~\ref{thm:MMSEGerber} provides tighter bounds.

Note that for any $\bpi$ it holds that $\overline{\mmse}(\bX)\geq\mmse_{\bpi}(\bX)$ and therefore Theorem~\ref{thm:MMSEGerber} implies that for any choice of $\bpi$
\begin{align}
\frac{1}{n}H(\bY)\geq h(\alpha)+\left(1-h(\alpha)\right)4\frac{\mmse_{\bpi}(\bX)}{n}.\label{eq:MMSEpi}
\end{align}
Thus, in order to apply Theorem~\ref{thm:MMSEGerber} we need to choose some $\bpi$ and evaluate $\mmse_{\bpi}(\bX)$. A trivial choice is the identity $\bpi=\{1,2,\ldots,n\}$, for which $\tfrac{\mmse_{\bpi}(\bX)}{n}=q(1-q)$ and our bound yields $\overline{H}(Y)\geq h(\alpha)+(1-h(\alpha))4q(1-q)$. It is easy to see that this choice of $\bpi$ yields the lower bound on $\mathsf{NEW}\left(\alpha,P_{\mathbf{X}}\right)$ from~\eqref{eq:NEWbounds}, and is therefore strictly weaker than~\eqref{eq:HMPgerber}. We would therefore like to choose a permutation $\bpi$ that will incur a higher value of $\mmse_{\bpi}(\bX)$. Assume that $\log{n}$ is an integer. A natural candidate is the following
\begin{align}
\bpi=\left(n,\frac{n}{2},\frac{n}{4},\frac{3n}{4},\frac{n}{8},\frac{3n}{8},\frac{5n}{8},\frac{7n}{8},\frac{n}{16},\frac{3n}{16},\ldots \right).\label{eq:piMarkov}
\end{align}
With this choice of $\pi$ we have that if $\pi(i)=r n/2^k$, for $r=1,3,\ldots,2^k-1$, we have that
\begin{align}
\text{MMSE}&\left(X_{\pi(i)}|X_{\pi(i-1)},X_{\pi(i-2)},\ldots,X_{\pi(1)}\right)\nonumber\\
&\geq\text{MMSE}\left(X_{\frac{r n}{2^k}}|X_{\frac{r n}{2^k}-\frac{n}{2^k}},X_{\frac{r n}{2^k}+\frac{n}{2^k}}\right)\nonumber\\
&=\text{MMSE}\left(X_m|X_{m-\frac{n}{2^k}},X_{m+\frac{n}{2^k}}\right)\nonumber\\
&\triangleq\text{MMSE}\left(\frac{n}{2^k}\right)\nonumber,
\end{align}
where the inequality follows from the Markovity of $\{X_n\}$ which implies that the conditional distribution of $X_m$ given multiple samples from the past and the future of the process depends only on the nearest sample from the past and the nearest sample from the future. We therefore have
\begin{align}
\frac{\mmse_{\bpi}(\bX)}{n}&\geq \sum_{k=1}^{\log{n}}\frac{1}{2}\frac{2^k}{n}\text{MMSE}\left(\frac{n}{2^k}\right)\nonumber\\
&=\frac{1}{2}\sum_{k=1}^{\log{n}}2^{-(\log{n}-k)}\text{MMSE}\left(2^{\log{n}-k}\right)\nonumber\\
&=\frac{1}{2}\sum_{t=0}^{\log{n}-1}2^{-t}\text{MMSE}\left(2^{t}\right).\label{eq:sumMMSE}
\end{align}
It now only remains to calculate
\begin{align}
\text{MMSE}\left(\ell\right)&=\text{MMSE}(X_n|X_{n-\ell}X_{n+\ell})\nonumber\\
&=\Expt\left(P_1^{\ell}(X_{n+\ell},X_{n-\ell})P_0^{\ell}(X_{n+\ell},X_{n-\ell}))\right)
\end{align}
where the random variable $P_i^\ell(X_{n+\ell},X_{n-\ell})$ is defined as
\begin{align}
P_i^\ell&(x_{n+\ell},x_{n-\ell})\triangleq\Pr(X_n=i|X_{n-\ell}=x_{n-\ell},X_{n+\ell}=x_{n+\ell})\nonumber\\
&=\frac{P(X_{n+\ell}=x_{n+\ell},X_n=i,X_{n-\ell}=x_{n-\ell})}{P(X_{n-\ell}=x_{n-\ell},X_{n+\ell}=x_{n+\ell})}\nonumber\\
&=\frac{P(X_{n+\ell}=x_{n+\ell}|X_n=i)P(X_n=i|X_{n-\ell}=x_{n-\ell})}{P(X_{n+\ell}=x_{n+\ell}|X_{n-\ell}=x_{n-\ell})},\nonumber
\end{align}
for $i=0,1$. Let $P_k\triangleq\Pr(X_{n+k}\neq X_n)$. With this notation we have that if $x_{n+\ell}\neq x_{n-\ell}$ then
\begin{align}
P_1^\ell(x_{n+\ell},x_{n-\ell})=P_0^\ell(x_{n+\ell},x_{n-\ell})&=\frac{P_\ell(1-P_\ell)}{P_{2\ell}}.\label{eq:diffSample}
\end{align}
On the other hand, if $x_{n+\ell}= x_{n-\ell}$ we have
\begin{align}
P_1^\ell(x_{n+\ell},x_{n-\ell})P_0^\ell(x_{n+\ell},x_{n-\ell})&=\frac{P_\ell^2}{1-P_{2\ell}}\frac{(1-P_\ell)^2}{1-P_{2\ell}}.\label{eq:sameSample}
\end{align}
It therefore follows that
\begin{align}
\text{MMSE}\left(\ell\right)&=\Pr(X_{n+\ell}\neq X_{n-\ell})\left(\frac{P_\ell(1-P_\ell)}{P_{2\ell}}\right)^2\nonumber\\
&+\Pr(X_{n+\ell}= X_{n-\ell})\left(\frac{P_\ell(1-P_\ell)}{1-P_{2\ell}}\right)^2\nonumber\\
&=\left(P_\ell(1-P_\ell)\right)^2\left(\frac{1}{P_{2\ell}}+\frac{1}{1-P_{2\ell}}\right)\nonumber\\
&=\frac{\left(P_\ell(1-P_\ell)\right)^2}{P_{2\ell}(1-P_{2\ell})}.\label{eq:PtMMSE}
\end{align}
Note that
\begin{align}
P_k&=\Pr(X_{n+k}\neq X_n)\nonumber\\
&=\Pr\left(\left(\prod_{i=n+1}^{n+k}(-1)^{W_i}\right)=-1 \right)\nonumber\\
&=\frac{1-\Expt\left(\prod_{i=n+1}^{n+k}(-1)^{W_i}\right)}{2}\nonumber\\
&=\frac{1-\left(1-2q\right)^k}{2}\label{eq:PkExp}
\end{align}
Substituting~\eqref{eq:PkExp} into~\eqref{eq:PtMMSE} gives
\begin{align}
\text{MMSE}\left(\ell\right)&=\frac{\left(\frac{1}{4}\left(1-(1-2q)^{2\ell}\right)\right)^2}{\frac{1}{4}\left(1-(1-2q)^{2\ell}\right)\left(1+(1-2q)^{2\ell}\right)}\nonumber\\
&=\frac{1}{4}\cdot\frac{1-(1-2q)^{2\ell}}{1+(1-2q)^{2\ell}}.\label{eq:MMSEtExp}
\end{align}
Substituting~\eqref{eq:MMSEtExp} into~\eqref{eq:sumMMSE} gives
\begin{align}
\lim_{n\to\infty}4\frac{\mmse_{\bpi}(\bX)}{n}&\geq \sum_{t=0}^{\infty}2^{-(t+1)}\frac{1-(1-2q)^{2^{t+1}}}{1+(1-2q)^{2^{t+1}}}  \nonumber\\
&\geq\sum_{t=1}^{\infty}2^{-t}\frac{1-(1-2q)^{2^t}}{1+(1-2q)^{2^t}},\label{eq:mmseMarkov}
\end{align}
and consequently we get the following theorem.
\begin{theorem}
Let $\{X_n\}$ be a first-order Markov process with parameter $q$, $\{Z_n\}$ be an i.i.d. $\Ber(\alpha)$ process statistically independent of $\{X_n\}$ and $Y_n=X_n\oplus Z_n$. Then
\begin{align}
\overline{H}(Y)\geq h(\alpha)+\left(1-h(\alpha)\right)\sum_{t=1}^{\infty}2^{-t}\frac{1-(1-2q)^{2^t}}{1+(1-2q)^{2^t}}.\nonumber
\end{align}
\label{thm:MarkovLB}
\end{theorem}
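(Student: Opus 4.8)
The plan is to specialize Theorem~\ref{thm:MMSEGerber} to this Markov source. Since $\overline{\mmse}(\bX)\ge\mmse_{\bpi}(\bX)$ for every permutation, the bound~\eqref{eq:MMSEpi} holds for an arbitrary $\bpi$, so it suffices to find a permutation of $\{1,\ldots,n\}$ whose normalized MMSE predictability $\mmse_{\bpi}(\bX)/n$ is as large as possible, and then let $n\to\infty$. The identity permutation only gives the weak bound $h(\alpha)+(1-h(\alpha))4q(1-q)$, so instead I would take the dyadic permutation~\eqref{eq:piMarkov}: reveal $X_n$, then $X_{n/2}$, then $X_{n/4},X_{3n/4}$, then the eight eighths, and so on. This is a good choice because by the time we must predict a coordinate introduced at level $k$, say $X_{rn/2^k}$ with $r$ odd, the two flanking coordinates $X_{(r-1)n/2^k}$ and $X_{(r+1)n/2^k}$, both at distance $n/2^k$, have already been revealed, so the predictability of $X_{rn/2^k}$ is no better than that of predicting a sample from its two neighbors at distance $n/2^k$.

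Concretely, the first-order Markov property means that conditioning $X_m$ on any set of coordinates lying strictly to one side and strictly to the other reduces, as far as the conditional law is concerned, to conditioning on the nearest revealed coordinate on each side; hence, by monotonicity of MMSE under extra conditioning, $\text{MMSE}(X_{\pi(i)}\mid X_{\pi(i-1)},\ldots,X_{\pi(1)})\ge\text{MMSE}(X_m\mid X_{m-n/2^k},X_{m+n/2^k})\triangleq\text{MMSE}(n/2^k)$ whenever $\pi(i)=rn/2^k$. There are $2^{k-1}$ such coordinates at level $k$, so summing over $k=1,\ldots,\log n$ and reindexing by $t=\log n-k$ (so that $2^k/n=2^{-t}$ and $n/2^k=2^t$) yields~\eqref{eq:sumMMSE}, namely $\mmse_{\bpi}(\bX)/n\ge\tfrac12\sum_{t=0}^{\log n-1}2^{-t}\,\text{MMSE}(2^t)$.

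It then remains to compute $\text{MMSE}(\ell)\triangleq\text{MMSE}(X_n\mid X_{n-\ell},X_{n+\ell})$ in closed form. Setting $P_k\triangleq\Pr(X_{n+k}\ne X_n)$, Bayes' rule together with the Markov factorization of the joint law shows that the posterior of $X_n$ is uniform when the two flanking samples disagree and is skewed otherwise, giving $\text{MMSE}(\ell)=\big(P_\ell(1-P_\ell)\big)^2/\big(P_{2\ell}(1-P_{2\ell})\big)$. Since a displacement by $k$ flips the chain according to the parity of $k$ i.i.d.\ $\Ber(q)$ variables, $\Expt\prod(-1)^{W_i}=(1-2q)^k$ gives $P_k=\tfrac12\big(1-(1-2q)^k\big)$, and substituting this simplifies the expression to $\text{MMSE}(\ell)=\tfrac14\cdot\frac{1-(1-2q)^{2\ell}}{1+(1-2q)^{2\ell}}$.

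Finally, I would plug this into the sum, multiply through by $4$, and pass to the limit along $n=2^m\to\infty$ (which is legitimate since $\{Y_n\}$ is stationary, so $\overline H(Y)$ exists); the summand is nonnegative and $O(2^{-t})$, so the partial sums converge and $4\mmse_{\bpi}(\bX)/n$ tends to $\sum_{t=0}^{\infty}2^{-(t+1)}\frac{1-(1-2q)^{2^{t+1}}}{1+(1-2q)^{2^{t+1}}}$, which upon reindexing $t\mapsto t-1$ equals $\sum_{t=1}^{\infty}2^{-t}\frac{1-(1-2q)^{2^t}}{1+(1-2q)^{2^t}}$. Combining with~\eqref{eq:MMSEpi} gives the claimed bound. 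I expect the only real obstacle to be bookkeeping: checking that the dyadic permutation is well defined (so one restricts to $n$ a power of two, or pads), handling the leftmost coordinate of each level where only one neighbor has been revealed — which by the same MMSE-monotonicity only strengthens the bound — and making sure the Markov reduction and the conditioning inequalities point in the directions used above. The analytic content, once the permutation is pinned down, is just the elementary geometric-type summation.
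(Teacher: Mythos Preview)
Your proposal is correct and follows essentially the same route as the paper: the dyadic permutation~\eqref{eq:piMarkov}, the Markov reduction to two-sided nearest neighbors, the closed-form evaluation of $\text{MMSE}(\ell)$ via $P_k=\tfrac{1}{2}(1-(1-2q)^k)$, and the passage to the limit are exactly the steps the paper takes. Your added remarks on restricting to $n$ a power of two and on the boundary coordinates are sound and merely make explicit what the paper leaves implicit.
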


\begin{remark}
For every $\alpha\in(0,1/2)$ there exist a $q_{\alpha}>0$ such that the bound from Theorem~\ref{thm:MarkovLB} outperforms Mrs. Gerber's Lemma for all $q\in(0,q_{\alpha})$. For example, $q_{0.11}\approx 0.212$.
As discussed in the previous section, $q_{\alpha}$ increases with $\alpha$ and approaches $1/2$ as $\alpha\to 1/2$.
\end{remark}

It will be instructive to study the behavior of the RHS of~\eqref{eq:mmseMarkov} in the limit of $q\rightarrow 0$. To this end we write, for some $0<\gamma<1$ such that $-\gamma\log{q}$ is an integer
\begin{align}
\lim_{n\to\infty}4&\frac{\mmse_{\bpi}(\bX)}{n}\geq \sum_{t=1}^{\infty}2^{-t}\frac{1-(1-2q)^{2^t}}{1+(1-2q)^{2^t}}\nonumber\\
&\geq \sum_{t=1}^{-\gamma\log{q}}2^{-t}\frac{1-(1-2q)^{2^t}}{2}\nonumber\\
&=\sum_{t=1}^{-\gamma\log{q}}2^{-(t+1)}\left(2^{t+1}q-\sum_{k=2}^{2^t}(-1)^k{{2^t}\choose k}(2q)^k \right)\nonumber\\
&\geq\sum_{t=1}^{-\gamma\log{q}}2^{-(t+1)}\left(2^{t+1}q-\sum_{k=2}^{2^t}(2^t)^k(2q)^k \right)\nonumber\\
&\geq\sum_{t=1}^{-\gamma\log{q}}q - 2^{-(t+1)}\sum_{k=2}^{2^t}\left(2^{t+1}q\right)^k. \label{eq:geometric}
\end{align}
Using the fact that $\sum_{k=2}^m r^k=\frac{r^2-r^{m+1}}{1-r}\leq\frac{r^2}{1-r}$ for $0<r<1$, we further bound~\eqref{eq:geometric} as
\begin{align}
\lim_{n\to\infty}4\frac{\mmse_{\bpi}(\bX)}{n}&\geq\sum_{t=1}^{-\gamma\log{q}}q - 2^{-(t+1)}\frac{\left(2^{t+1}q\right)^2}{1-2^{t+1}q} \nonumber\\
&=\sum_{t=1}^{-\gamma\log{q}}q - \frac{2^{t+1}q^2}{1-2^{t+1}q} \nonumber\\
&\geq-\gamma q\log{q}\left(1 - \frac{2 q^{1-\gamma}}{1-2 q^{1-\gamma}}\right). \label{eq:mmseboundGamma}
\end{align}
For $q\rightarrow 0$ we can take $\gamma=1-1/\sqrt{-\log{q}}$ such that
\begin{align}
\lim_{n\to\infty}4\frac{\mmse_{\bpi}(\bX)}{n}&\geq -q\log(q)\left(1-\varepsilon'_q\right)\nonumber\\
&=h(q)\left(1-\varepsilon_q\right)\label{eq:MarkovMMSElimit}
\end{align}
where $\varepsilon'_q,\varepsilon_q\rightarrow 0$ as $q\rightarrow 0$. We have therefore obtained that
\begin{align}
\lim_{q\to 0}\lim_{n\to \infty}4\frac{\mmse_{\bpi}(\bX)}{n h(q)}=\lim_{q\to 0}\lim_{n\to \infty} 4\frac{\mmse_{\bpi}(\bX)}{H(\bX)}\geq 1.\nonumber
\end{align}
Thus, we have seen that while the trivial choice $\bpi'=\{1,2.\ldots,n\}$ yields $\mmse_{\bpi'}(\bX)$ that meets the lower bound from~\eqref{eq:MMSEbounds}, the more clever choice of $\bpi$ given in~\eqref{eq:piMarkov} yields $\mmse_{\bpi}(\bX)$ that meets the upper bound from~\eqref{eq:MMSEbounds} in the limit.

\begin{remark}
The permutation $\bpi$ from~\eqref{eq:piMarkov} can be found by a greedy algorithm that constructs the permutation vector sequentially by choosing in the $i$th step
\begin{align}
\pi(i)=\argmax_{j\in[n]\setminus\{\pi(1),\ldots,\pi(i-1) \}}\text{MMSE}\left(X_j|X_{\pi(1)},\ldots,X_{\pi(i-1)}\right),\nonumber
\end{align}
where $[n]\triangleq\{1,\ldots,n\}$. The asymptotic optimality of $\bpi$ from~\eqref{eq:piMarkov} for symmetric Markov chains may suggest that such a greedy algorithm will always yield the permutation vector that maximizes $\mmse_{\bpi}(\bX)$. This is, unfortunately, not true in general. As a counterexample consider the vector $\bX=(X_1,X_2)$ with
\begin{align}
\Pr(X_1=0,X_2=0)=\frac{1}{2} \ &; \ \Pr(X_1=0,X_2=1)=0\nonumber\\
\Pr(X_1=1,X_2=0)=\varepsilon \ &; \ \Pr(X_1=1,X_2=1)=\frac{1}{2}-\varepsilon\nonumber
\end{align}
for which $\Var(X_1)>\Var(X_2)$ but
\begin{align}
\Var(X_2)+\text{MMSE}(X_1|X_2)>\Var(X_1)+\text{MMSE}(X_2|X_1)\nonumber
\end{align}
for $\epsilon$ small enough.
\end{remark}

Substituting~\eqref{eq:MarkovMMSElimit} into Theorem~\ref{thm:MarkovLB} gives that for small $q$
\begin{align}
\overline{H}(\bY)\geq h(\alpha)+(1-h(\alpha))h(q)(1-\varepsilon_q).\label{eq:MarkovEntRateLB}
\end{align}
Note that this bound has an infinite slope at $q=0$. This is always better than the Cover-Thomas type of bounds $\overline{H}(Y)\geq H(Y_m|Y_{m-1},\ldots,Y_1,X_0)$ derived in~\cite[Theorem 4.5.1]{Cover} which are always smaller than $h(q^{*m}*\alpha)$, where $q^{*m}$ denotes convolving $q$ with itself $m$ times. Both bounds evaluate to $h(\alpha)$ at  $q=0$, but the derivative of the latter is finite for any finite $m$. Thus, for small $q$ our bound is better than the Cover-Thomas bound of any order.

The bound~\eqref{eq:MarkovEntRateLB} is weaker than the best known lower bounds on $\overline{H}(Y)$ in the rare transition regime. For example, in~\cite{now05} it is shown that $\overline{H}(Y)\geq h(\alpha)-\tfrac{(1-2\alpha)^2}{1-\alpha}q\log {q}$, whereas in~\cite{pq11} this was improved to $\overline{H}(Y)\geq h(\alpha)+h(q)-Cq$ for some $C>0$. However, the two bounds mentioned above are ``tailor-made'' to hidden Markov models, whereas~\eqref{eq:MarkovEntRateLB} follows from applying our generic bound from Theorem~\ref{thm:MMSEGerber} to the special case of a hidden Markov model. In the next subsection we will show that the scalar version of our MMSE-based bound, stated in Lemma~\ref{lem:condMMSEGerber} can be used to enhance such a ``tailor-made'' bound for Markov chains.

\subsection{Bound based on the Ordentilch-Weissman Method}
In~\cite{ow11}, E. Ordentlich and T. Weissman cleverly observed that the entropy rate of a binary symmetric first-order hidden Markov process can be expressed as
\begin{align}
\overline{H}(\bY)=\Expt\left(\frac{e^{W_i}}{1+e^{W_i}}*q*\alpha \right),\label{eq:OWexpt}
\end{align}
where the auto-regressive process $W_i$ is defined as
\begin{align}
W_i=R_i\ln\frac{1-\alpha}{\alpha}+S_i f(W_{i-1})
\end{align}
for
\begin{align}
f(t)=\ln\frac{e^t(1-q)+q}{q e^t+(1-q)}
\end{align}
and i.i.d. processes $\{R_i\}$ and $\{S_i\}$ statistically independent of $W_0$, with distributions
\begin{align}
R_i=\begin{cases}
1 & w.p. \ 1-\alpha \\
-1 & w.p. \ \alpha
\end{cases} \ ; \
S_i=\begin{cases}
1 & w.p. \ 1-q \\
-1 & w.p. \ q
\end{cases}.
\end{align}
The expectation in~\eqref{eq:OWexpt} is taken under the assumption that $W_0$ is distributed according to the (unique) stationary distribution of the process $\{W_i\}$, and is therefore well-defined. In~\cite{ow11}, upper and lower bounds on $\overline{H}(\bY)$ were derived by analyzing the support of the process $\{W_i\}$. Here, we apply Lemma~\ref{lem:condMMSEGerber} in order to derive a lower bounds on $\overline{H}(\bY)$. To this end, we set $X|W_i\sim\Ber\left(\tfrac{e^{W_i}}{1+e^{W_i}} \right)$ and find a lower bounds on
\begin{align}
\mathrm{MMSE}(X|W_i)=\Expt\left(\frac{e^{W_i}}{(1+e^{W_i})^2}\right).\nonumber
\end{align}
Let $F\triangleq e^{f(W_{i-1})}$ and $\eta=\tfrac{1-\alpha}{\alpha}$, such that $e^{W_i}=\eta^{R_i} F^{S_i}$. We have
\begin{align}
\Expt\bigg(\frac{e^{W_i}}{(1+e^{W_i})^2}&|F\bigg)=(1-\alpha)(1-q)\frac{\eta F}{(1+\eta F)^2}\nonumber\\
&+(1-\alpha)q \frac{\eta/F}{(1+\eta/F)^2}\nonumber\\
&+\alpha(1-q)\frac{F/\eta}{(1+F/\eta)^2}+\alpha q\frac{(1/(\eta F)}{(1+1/(\eta F))^2}\nonumber\\
&=\left((1-\alpha)(1-q)+\alpha q\right) \frac{\eta F}{(1+\eta F)^2}\nonumber\\
&+\left((1-\alpha)q+\alpha(1-q)\right)\frac{F/\eta}{(1+F/\eta)^2}\label{eq:invsim}\\
&=(1-\alpha*q)\frac{\eta F}{(1+\eta F)^2}+(\alpha*q) \frac{F/\eta}{(1+F/\eta)^2}\nonumber\\
&\triangleq g(F),\label{eq:Fmmse}
\end{align}
where we have used the fact that $e^{W_i}/(1+e^{W_i})^2=e^{-W_i}/(1+e^{-W_i})^2$ in~\eqref{eq:invsim}. Let $\m{S}$ be the support of the random variable $F$. Clearly,
\begin{align}
\mathrm{MMSE}(X|W_i)=\Expt g(F)\geq \min_{s\in{S}} g(s)
\end{align}
In~\cite[eq. (44-45)]{ow11} it is shown that $\m{S}\subseteq[1/F_{\text{max}},F_{\text{max}}]$, where
\begin{align}
F_{\text{max}}\triangleq \frac{(\eta-1)(1-q)+\sqrt{4\eta q^2+(\eta-1)^2(1-q)^2}}{2\eta q}.\label{eq:Fmax}
\end{align}
Let $g_1(F)\triangleq \tfrac{\eta F}{(1+\eta F)^2}$ and $g_2(F)\triangleq \tfrac{F/\eta}{(1+F/\eta)^2}$, and note that $g_2(1/F)=g_1(F)$ and that $g(F)=(1-\alpha*q)g_1(F)+(\alpha*q)g_2(F)$. For $F\geq 1$ we have that $g_2(F)\geq g_1(F)$, whereas for $F<1$ we have that $g_1(F)>g_2(F)$. Since $(1-\alpha*q)\geq (\alpha*q)$ (recall that we assume $\alpha,q\leq 1/2$), we must have that
\begin{align}
\min_{s\in[1/F_{\text{max}},F_{\text{max}}]}g(s)=\min_{s\in[1,F_{\text{max}}]}g(s).
\end{align}
Straightforward algebra gives
\begin{align}
\sign&\left(g'(s)\right)\nonumber\\
&=\sign\left((\eta-s)(1+\eta s)^3-\frac{1-\alpha*q}{\alpha*q}(\eta s-1)(\eta+s)^3\right).\nonumber
\end{align}
Note that $\sign(g'(1))=-1$, and therefore if the equation $\sign\left(g'(s)\right)=0$ does not have any real solution in $[1,F_{\text{max}})$ then we must have
\begin{align}
\min_{s\in[1/F_{\text{max}},F_{\text{max}}]}g(s)=g(F_{\text{max}}).
\end{align}
Otherwise, $\min_{s\in[1/F_{\text{max}},F_{\text{max}}]}g(s)$ is obtained either in one of the solutions of $\sign\left(g'(s)\right)=0$ in the interval $[1,F_{\text{max}})$, or in $F_{\text{max}}$. The equation $\sign\left(g'(s)\right)=0$ is equivalent to
\begin{align}
&\eta\left(\frac{1-\alpha*q}{\alpha*q}+\eta^2\right)s^4+\left(3\eta^2\frac{1}{\alpha*q}-\eta^4-\eta\right)s^3\nonumber\\&+3\eta\frac{1-2(\alpha*q)}{\alpha*q}(\eta^2-1)s^2\nonumber\\
&+\left(\frac{1-\alpha*q}{\alpha*q}\eta^4+1-3\frac{\eta^2}{\alpha*q}\right)s-\eta\left(1+\frac{1-\alpha*q}{\alpha*q}\eta^2\right)=0,\label{eq:Roots}
\end{align}
Let $\m{S^*}$ be the set of solutions to the equation~\eqref{eq:Roots} in $[1,F_{\text{max}})$. We conclude that $\mathrm{MMSE}(X|W_i)\geq g(F^*)$ where
\begin{align}
F^*=\argmin_{s\in\left(\m{S}^*\cup F_{\text{max}}\right)} g(s).\label{eq:Fstar}
\end{align}
and this combined with~\eqref{eq:OWexpt} and Lemma~\ref{lem:condMMSEGerber} yields the following.
\begin{theorem}\label{thm:MarkovLB2}
Let $\{X_n\}$ be a first-order Markov process with parameter $q$, $\{Z_n\}$ be an i.i.d. $\Ber(\alpha)$ process statistically independent of $\{X_n\}$ and $Y_n=X_n\oplus Z_n$. Then
\begin{align}
\overline{H}(Y)\geq h(\alpha*q)+\left(1-h(\alpha*q)\right)g(F^*),\nonumber
\end{align}
where $F^*$ is defined by~\eqref{eq:Fmax},~\eqref{eq:Roots} and~\eqref{eq:Fstar}, $g(\cdot)$ is defined in~\eqref{eq:Fmmse}, and $\eta=\tfrac{1-\alpha}{\alpha}$.
\end{theorem}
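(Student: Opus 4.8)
The plan is to feed the Ordentlich--Weissman filtering representation of $\overline{H}(\bY)$ into the scalar bound of Lemma~\ref{lem:condMMSEGerber}, and then to reduce the resulting MMSE term to a one-dimensional minimization over the (compactly supported) state of the filter. First, for \textbf{Step 1 (reduce to a scalar MMSE bound)}, note that by~\eqref{eq:OWexpt}, in the stationary regime $\overline{H}(\bY)=\Expt\,h\!\left(\tfrac{e^{W_i}}{1+e^{W_i}}*q*\alpha\right)$. Since binary convolution is associative, $p*q*\alpha=p*(q*\alpha)$, so setting $X\mid W_i\sim\Ber\!\left(\tfrac{e^{W_i}}{1+e^{W_i}}\right)$ and letting $Z'\sim\Ber(q*\alpha)$ be independent of $(X,W_i)$ gives $\overline{H}(\bY)=H(X\oplus Z'\mid W_i)$. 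Applying Lemma~\ref{lem:condMMSEGerber} with crossover probability $q*\alpha$ in place of $\alpha$ then yields
\[
\overline{H}(\bY)\ \ge\ h(q*\alpha)+\bigl(1-h(q*\alpha)\bigr)\,4\,\mathrm{MMSE}(X\mid W_i),
\]
so it remains to lower bound $\mathrm{MMSE}(X\mid W_i)=\Expt\!\left(\tfrac{e^{W_i}}{(1+e^{W_i})^2}\right)$.

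For \textbf{Step 2 (conditioning on the previous filter state)}, I would condition on $F\triangleq e^{f(W_{i-1})}$ and average over the independent sign pair $(R_i,S_i)$, using $e^{W_i}=\eta^{R_i}F^{S_i}$ with $\eta=\tfrac{1-\alpha}{\alpha}$. The symmetry $\tfrac{e^{w}}{(1+e^{w})^2}=\tfrac{e^{-w}}{(1+e^{-w})^2}$ collapses the four weighted terms into the two-term function $g(F)$ of~\eqref{eq:Fmmse}, whence $\mathrm{MMSE}(X\mid W_i)=\Expt\,g(F)\ \ge\ \min_{s\in\m{S}}g(s)$, where $\m{S}$ is the support of $F$. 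By \cite[eq.~(44--45)]{ow11}, $\m{S}\subseteq[1/F_{\text{max}},F_{\text{max}}]$ with $F_{\text{max}}$ as in~\eqref{eq:Fmax}, so it suffices to minimize the explicit rational function $g$ over this interval.

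Finally, \textbf{Step 3 (solve the scalar minimization)}: write $g=(1-\alpha*q)g_1+(\alpha*q)g_2$ with $g_1(F)=\tfrac{\eta F}{(1+\eta F)^2}$ and $g_2(F)=\tfrac{F/\eta}{(1+F/\eta)^2}$. Using $g_2(1/F)=g_1(F)$ one gets $g(F)-g(1/F)=(1-2(\alpha*q))\bigl(g_1(F)-g_2(F)\bigr)$, which is $\le 0$ for $F\ge1$ (since $g_1\le g_2$ there and $\alpha*q\le\tfrac12$); hence the minimum over $[1/F_{\text{max}},F_{\text{max}}]$ equals the minimum over $[1,F_{\text{max}}]$. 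On $[1,F_{\text{max}}]$ a direct differentiation shows $\sign(g'(s))$ coincides with the sign of the quartic in~\eqref{eq:Roots} (after clearing denominators), while $\sign(g'(1))=-1$; therefore the minimum is attained either at a root of~\eqref{eq:Roots} lying in $[1,F_{\text{max}})$ or at $F_{\text{max}}$, i.e.\ at $F^*$ defined by~\eqref{eq:Fstar}. Substituting $\mathrm{MMSE}(X\mid W_i)\ge g(F^*)$ into the bound of Step~1 proves the theorem.

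The main obstacle: Steps~1 and~2 are essentially bookkeeping, and the one genuinely external ingredient --- the containment $\m{S}\subseteq[1/F_{\text{max}},F_{\text{max}}]$ --- is imported wholesale from~\cite{ow11}. The real work is the elementary-but-heavy calculus of Step~3: reducing $\sign(g'(s))$ to the quartic~\eqref{eq:Roots}, verifying $\sign(g'(1))=-1$, and the symmetry argument that folds $[1/F_{\text{max}},F_{\text{max}}]$ onto $[1,F_{\text{max}}]$. One should also keep the analysis restricted to $0<\alpha,q<\tfrac12$, the degenerate case $q*\alpha=\tfrac12$ being trivial since the bound then collapses to $h(q*\alpha)=1$.
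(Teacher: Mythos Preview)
Your proposal is correct and follows the paper's argument essentially step for step: the paper likewise applies Lemma~\ref{lem:condMMSEGerber} (with crossover $\alpha*q$) to the Ordentlich--Weissman representation~\eqref{eq:OWexpt}, conditions on $F=e^{f(W_{i-1})}$ to obtain $g(F)$, imports the support bound $[1/F_{\max},F_{\max}]$ from~\cite{ow11}, uses the same $g_1/g_2$ symmetry to restrict to $[1,F_{\max}]$, and locates the minimizer via $\sign(g'(1))=-1$ together with the quartic~\eqref{eq:Roots}. Your computation of $g(F)-g(1/F)=(1-2(\alpha*q))(g_1(F)-g_2(F))$ is just a slightly more explicit rendering of the paper's symmetry argument.
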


In Figure~\ref{fig:BoundsComp} we plot the bound from Theorem~\ref{thm:MarkovLB2} for $\alpha=0.11$ and $q\in[0,0.5]$. For comparison, we also plot the lower bound from \cite[Corollary 4.8 and Lemma 4.10]{ow11}, and it is seen that for small values of $q$ our new bound improves upon that of~\cite{ow11}.

\begin{figure}[h]
\begin{center}
\includegraphics[width=1 \columnwidth]{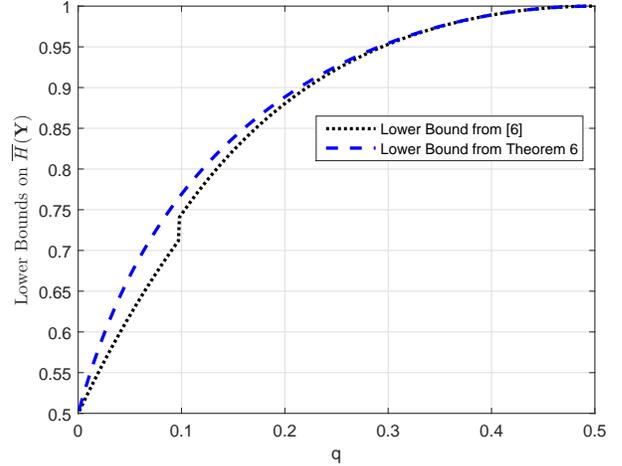}
\end{center}
\caption{Comparison between the lower bound from Theorem~\ref{thm:MarkovLB2} and the lower bound from \cite[Corollary 4.8 and Lemma 4.10]{ow11} for $\alpha=0.11$ and $q$ ranging between $0$ and $\tfrac{1}{2}$.}
\label{fig:BoundsComp}
\end{figure}

\bibliographystyle{IEEEtran}
\bibliography{OrBib2}




\end{document}